\documentclass[12pt]{article}
\usepackage{latexsym, amsbsy, amsthm,amssymb,multirow, epsfig, amsmath}
\usepackage{hyperref}
\usepackage{setspace}
\usepackage{natbib,graphicx,setspace,lscape,longtable}
\usepackage{appendix}
\usepackage{bm}
\usepackage{enumitem}
\usepackage{makecell}
\usepackage{threeparttable}
\usepackage{graphicx}
\usepackage{color}
\usepackage{xcolor}
\usepackage{subfigure}
\usepackage{array}
\RequirePackage[mathlines, displaymath]{lineno}
\usepackage{latexsym}
\usepackage{epsfig}
\usepackage{bm}
\usepackage{makecell}
\usepackage{threeparttable}
\usepackage{booktabs}
\usepackage{float}
\usepackage{tikz}
\usetikzlibrary{shapes,snakes}
\bibpunct{(}{)}{;}{a}{,}{,}
\usepackage{geometry}
\def\beqr{\begin{eqnarray}}
\def\eeqr{\end{eqnarray}}
\def\beqrs{\begin{eqnarray*}}
\def\eeqrs{\end{eqnarray*}}

\makeatletter

\newcommand{\Rmnum}[1]{\expandafter\@slowromancap\romannumeral #1@}
\makeatother

\def\1{{\bf 1}}

\newtheorem{prop}{\sc Proposition}

\begin{document}
	\title{An absolute measure of heterogeneity for lnOR in meta-analysis}
	
	{\small
		\author{Ke Yang$^{1}$, 
			Lei Shi$^{2}$, Wangli Xu$^{3}$, Liping Zhu$^{4}$, and Tiejun Tong$^{5,}$\thanks{Corresponding author. E-mail: tongt@hkbu.edu.hk}\\ \\
			{\small $^{1}$Department of Statistics and Data Science, Beijing
				University of Technology,}\\
			{\small Beijing, China}\\
			{\small $^{2}$School of Statistics and Mathematics, Yunnan University of Finance and Economics,}\\
			{\small Yunnan, China}\\
			{\small $^{3}$Center for Applied Statistics and School of Statistics, Renmin University of China,}\\ {\small Beijing, China}\\
			{\small $^{4}$Center for Applied Statistics and Institute of Statistics and Big Data,}\\
			{\small Renmin University of China, Beijing, China}\\
			{\small $^{5}$Department of Mathematics, Hong Kong Baptist University, Hong Kong, China}
		}
	}
	\date{}
	\maketitle
	
	\begin{abstract}
		Quantifying the heterogeneity is an important issue in meta-analysis. The conventional heterogeneity statistic, $I^2$, heavily depends on study sample sizes and tends to approach one as the sample sizes increase. To overcome this limitation, a sample size independent measure, ${\rm ICC}_{\rm MA}$, has recently been developed to quantify the heterogeneity, yet this method is restricted to meta-analysis with continuous outcomes. In practice, however, binary outcomes are more widely encountered, with the log odds ratio (lnOR) frequently adopted as an effect size to compare the effects of two treatments in terms of odds.  
		To fill the gap, we propose a new heterogeneity measure for meta-analysis with binary outcomes by introducing a latent continuous variable model. Under this framework, the proposed measure admits a closed-form expression within the random-effects meta-analysis model with lnOR as the effect size. We define this sample size independent measure as ${\rm ICC}_{\rm MA}^{\rm OR}$, and moreover propose a new estimator, $I^2_A$, to evaluate this measure. Simulation studies demonstrate that our new estimator is nearly unbiased and aligns closely with its counterpart applied for standardized mean differences when the continuous variable is observable. A real data application further demonstrates that the proposed heterogeneity statistic for binary outcomes and its counterpart for continuous outcomes provide interpretable and coherent estimates across different outcome types.
		
		\vskip 12pt
		\noindent
		$Key \ words$: Absolute measure of heterogeneity; latent variable model; log odds ratio; meta-analysis; standardized mean difference.
		
	\end{abstract}

\newpage
\baselineskip 22pt
\setlength{\parskip}{0.2\baselineskip}

\section{Introduction}

Meta-analysis has become the standard statistical approach for synthesizing evidence from multiple independent studies and plays a fundamental role in evidence-based medicine \citep{egger1997meta,higgins2019cochrane}. Besides estimating an overall treatment effect, an equally important objective is to quantify the extent to which treatment effects vary across studies. Such between-study heterogeneity directly influences the interpretation, generalizability, and clinical applicability of pooled estimates. When the heterogeneity is negligible, the pooled effect adequately represents the treatment effect across studies, whereas substantial heterogeneity indicates genuine differences among study populations. Consequently, reliable quantification of the between-study heterogeneity has remained one of the central methodological problems in meta-analysis.

Under the random-effects model (REM), the between-study heterogeneity is naturally characterized by the between-study variance, $\tau^2$. Since the magnitude of $\tau^2$ depends on the scale of the effect size, direct comparisons across different meta-analyses are often difficult. To overcome this limitation, \cite{higgins2002quantifying} and \cite{higgins2003measuring} proposed the widely used heterogeneity statistic, $I^2$, which quantifies the proportion of total variability attributable to the between-study heterogeneity rather than within-study sampling error \citep{higgins2026}. Owing to its intuitive interpretation and ease of computation, $I^2$ has become the most widely reported heterogeneity statistic in evidence synthesis. Despite its widespread use, $I^2$ has an important limitation: it depends on the within-study sampling variances and therefore on study sample sizes. Consequently, $I^2$ may approach one as study sample sizes increase, even when the underlying between-study heterogeneity remains unchanged \citep{rucker2008undue,riley2016external,borenstein2017basics}. Therefore, $I^2$ reflects the relative variability of the observed effect estimates rather than the absolute variability among the underlying study populations.

To advance the literature,  \cite{yang2025alternative} reconsidered the heterogeneity from the perspective of the underlying individual-level populations rather than the observed study-level effect estimates. Motivated by the intraclass correlation coefficient in analysis of variance (ANOVA), they introduced the absolute heterogeneity measure
\begin{equation}
	\label{eq:iccma}
	{\rm ICC}_{\rm MA}
	=
	\frac{\tau^2}
	{\tau^2+\sigma_{\rm pop}^2},
\end{equation}
where $\sigma_{\rm pop}^2$ denotes a representative population variance of the individual-level outcome. ${\rm ICC}_{\rm MA}$ is invariant to study sample sizes and directly quantifies the proportion of total population variation attributable to differences between study populations. They further proposed the heterogeneity statistic $I_{\rm A}^2$, which consistently estimates ${\rm ICC}_{\rm MA}$ using only study-level summary statistics. Consequently, ${\rm ICC}_{\rm MA}$ provides an absolute and clinically interpretable measure of heterogeneity. Nevertheless, their new method is limited to meta-analysis with continuous outcomes, leaving an important methodological gap for the studies reporting binary outcomes.

Extending ${\rm ICC}_{\rm MA}$ to binary outcomes is considerably more challenging than simply replacing one effect size by another. The difficulty arises because, unlike continuous outcomes, binary responses do not possess a natural individual-level population variance, which underlies the definition of ${\rm ICC}_{\rm MA}$. This challenge can be overcome by viewing binary outcomes as dichotomized realizations of latent continuous variables. Such latent-variable formulations have long been used to establish the relationship between standardized mean differences (SMDs) and log odds ratios (lnORs) \citep{cox1989analysis,chinn2000simple,murad2019continuous,higgins2019cochrane}. Consequently, they provide a natural foundation for extending the population-level interpretation of heterogeneity from continuous to binary outcomes.

Inspired by this, this paper develops an absolute heterogeneity measure for meta-analysis with binary outcomes. We establish a latent-variable framework linking binary outcomes and their underlying continuous counterparts at the individual level. Under a logistic latent-variable model, we derive an explicit relationship between study-specific lnORs and SMDs and show that the corresponding random-effects models are intrinsically connected. This connection naturally extends the population-level interpretation underlying ${\rm ICC}_{\rm MA}$ from meta-analysis with continuous outcomes to that with binary outcomes, leading to a sample size invariant heterogeneity measure, denoted by ${\rm ICC}_{\rm MA}^{\rm OR}$, together with its practical estimator $I_{\rm A}^2$.

The proposed framework offers three main advantages. First, it provides a unified population-level interpretation of heterogeneity for meta-analysis with both continuous and binary outcomes, enabling coherent comparisons across different outcome types. Second, unlike the conventional $I^2$ statistic, the proposed measure is invariant to study sample sizes and therefore quantifies the absolute heterogeneity among study populations. Third, the proposed methodology requires only published study-level summary statistics and can be readily implemented using existing estimators of the between-study variance, facilitating its application in routine evidence synthesis.

The remainder of this paper is organized as follows. Section 2 presents a motivating example of meta-analysis  involving binary and continuous outcomes, highlighting the connection between the two types of outcomes. Section 3 develops a latent continuous variable model
for binary outcomes, leading to the absolute heterogeneity measure ${\rm ICC}_{\rm MA}^{\rm OR}$ for lnOR and its
estimator $I_{\rm A}^2$. Section 4 evaluates the performance of the proposed statistic via simulation studies and by revisiting the real data example in Section 2. Lastly, Section 5 concludes with a summary and discusses potential avenues for future research.

\section{A motivating example}

Following \cite{yang2025alternative}, subsequent studies have increasingly recognized the reliability of ${\rm ICC}_{\rm MA}$ as a measure of heterogeneity and have applied its estimator $I_{\rm A}^2$ as a heterogeneity statistic in practice. For example, \cite{hong2025paul} stated, ``\textit{We used the $I_{\rm A}^2$ statistic as a measure for quantifying the heterogeneity in our meta-analysis, as it is an unbiased estimator that also has the property of sample size invariance  distinguishing it from the traditional Higgins $I^2$ statistic.}"

To fill the important gap that calls for the development of a counterpart for binary outcomes, we now present an example from the network meta-analysis of \cite{zhou2020comparative}, where some studies report both continuous and binary outcomes for the same treatment. This setting suggests a potential connection between binary and continuous outcomes, indicating that the heterogeneity measure developed for continuous outcomes may inform its binary counterparts.
The dataset includes 71 randomized controlled trials on treatments for depressive disorders in children and adolescents. Among these, seven trials compare fluoxetine with placebo in patients with major depressive disorder, of which five report both binary and continuous outcomes based on the same sample and are retained in our analysis.
Data are extracted from \cite{zhou2020comparative}, \cite{cipriani2016comparative}, and the original trial reports, and are summarized in Table \ref{tab:data_summary}. Each study consisted of two arms: a treatment group receiving fluoxetine and a control group receiving placebo. For each arm, we record the sample size, the continuous outcome of mean score reduction from baseline with its standard deviation, and the binary outcome representing the number of responders.
\begin{table}
	\centering
	\caption{Summary of the Five Fluoxetine vs Placebo Studies}
	\label{tab:data_summary}
	\begin{tabular}{lcccccc}
		\hline
		& \multicolumn{3}{c}{Treatment Group (Fluoxetine)} & \multicolumn{3}{c}{Control Group (Placebo)} \\
		\cline{2-4} \cline{5-7}\\[-6pt]
		Study&  $N$ & Reduction (SD) & Responder & $N$ & Reduction (SD) & Responder \\
		\hline
		Almeida(2005)&7&9.14(4.02)&4& 9&8.14(4.54)&6\\
		Atkinson(2014)&113& 23.7(11.27)&71& 103&24.3(11.27)&64\\
		Emslie(1997)&48&20.1(13.19)&27& 48&10.5(14.84)&16\\
		Emslie(2014)&112&22.6(13.44)&68& 117&21.6(13.74)&70\\
		Weihs(2018)& 110&24.8(12.27)& 86& 112&23.1(12.49)&70\\
		\hline
	\end{tabular}
\end{table}

To clarify the measurement of efficacy outcomes, Table \ref{measurementtools} summarizes the assessment instruments. Continuous outcomes were measured by either the Children's Depression Rating Scale–Revised (CDRS-R; range 17–113) or the Depression Self-Rating Scale for Children (DSRS; range 0–36), both widely applied to assess depressive symptom severity.
Binary outcomes were defined in two ways. The first dichotomizes the continuous measure, defining response as at least a 50\% rreduction in the CDRS-R score from baseline, which can be interpreted as a threshold on symptom reduction given comparable baseline scores across studies. The second employed the Clinical Global Impressions-Improvement (CGI-I) scale, a 7-point clinician-rated measure of global improvement, with responders defined as CGI-I scores of 1 or 2. Although CGI-I is not directly derived from the continuous scales, it reflects overall clinical improvement and is thus conceptually linked to the continuous outcomes.
\begin{table}
	\centering
	\caption{Measurement Tools and Binary Outcome Definitions in the Five Studies}
	\label{measurementtools}
	\begin{tabular}{lll}
		\hline
		Study & Continuous Outcome& Binary Outcome\\
		\hline
		Almeida (2005) & DSRS & Not known \\
		Atkinson (2014) & CDRS-R & 50\% reduction from baseline on CDRS-R \\
		Emslie(1997)& CDRS-R & CGI-I rating of 1 or 2 \\
		Emslie (2014) & CDRS-R & 50\% reduction from baseline on CDRS-R \\
		Weihs (2018) & CDRS-R & CGI-I rating of 1 or 2 \\
		\hline
	\end{tabular}
\end{table}

Taken together, these observations suggest that binary outcomes can be viewed as thresholded forms of underlying latent continuous variables. Several studies have leveraged this framework to establish transformations between lnOR and SMD, enabling their joint synthesis (\citealp{cox1989analysis,suissa1991binary,whitehead1999combining,chinn2000simple,murad2019continuous,jing2023bayesian}). Motivated by this, we develop the counterpart of ${\rm ICC}_{\rm MA}$ in (\ref{eq:iccma}) to meta-analysis with lnOR as the effect size using a latent continuous variable model for individual-level data, as formalized in Section 3.

\section{Heterogeneity measure for lnOR}

The key difficulty in extending the absolute heterogeneity measure from continuous to binary outcomes is that binary responses do not possess a natural population variance at the individual level. Consequently, the denominator of ICC$_{\rm MA}$ cannot be directly defined. To overcome this difficulty, we introduce a latent continuous variable formulation, under which binary outcomes arise through thresholding of an underlying continuous response. This framework naturally links REMs for SMD and lnOR, providing the basis for extending the population-level interpretation of heterogeneity from continuous to binary outcomes.

\subsection{Individual-level latent variable formulation}
To develop a unified heterogeneity measure, we first introduce individual-level formulations for continuous and binary outcomes. We emphasize that, at this stage, the two outcome types are modeled separately. Their connection through a latent variable formulation will be established in the next subsection.

For the $i$th study $(i=1,\ldots,k)$, let $\{\tilde y_{ij}^{T}\}_{j=1}^{n_i^T}$ and $\{\tilde y_{ij'}^{C}\}_{j'=1}^{n_i^C}$ denote the continuous responses for the $j$th individual from the treatment and the $j'$th individual from the control groups, respectively. We assume that the individual-level responses satisfy
\beqr\label{latent_model}
\tilde y^T_{ij}=\sigma_i(\tilde\mu_i^T+\xi_{ij}^T) \qquad \text{and} \qquad\tilde y^C_{ij'}=\sigma_i(\tilde\mu_i^C+\xi_{ij'}^C),
\eeqr
where $\sigma_i>0$ is a study-specific scale parameter accounting for differences in measurement scales across studies, $\tilde\mu_i^T$ and $\tilde\mu_i^C$ are the standardized group means, and the independent random errors $\xi_{ij}^T$ and $\xi_{ij'}^C$ have mean zero and unit variance.

The corresponding study-specific treatment effect is measured by SMD,
\beqr\label{tmu_i}
\tilde\mu_i = \tilde\mu_i^T -\tilde\mu_i^C.
\eeqr
At the study level, the observed SMDs $\{\tilde y_i\}_{i=1}^k$ are synthesized using REM as
\begin{equation}\label{smd_rem}
	\tilde y_i = \tilde\mu_i + \tilde\epsilon_i,\quad
	\tilde\mu_i \stackrel{\text{i.i.d.}}{\sim} N(\tilde\mu, \tilde\tau^2), \quad
	\tilde\epsilon_i \stackrel{\text{ind}}{\sim} N(0, \sigma_{\tilde y_i}^2),
\end{equation}
where $\tilde\mu$ denotes the overall treatment effect, $\tilde\tau^2$ is the between-study variance, and $\tilde\epsilon_i$ represents the within-study sampling error with variance $\sigma_{\tilde y_i}^2$.

We next consider binary outcomes measured on the same individuals. For the $i$th study, let
$\{y_{ij}^{T}\}_{j=1}^{n_i^T}$ and
$\{y_{ij'}^{C}\}_{j'=1}^{n_i^C}$
denote the binary responses for the treatment and control groups, respectively, where
$y_{ij}^{T}$ and $y_{ij'}^{C}$ correspond to the same individuals whose latent continuous responses are
$\tilde y_{ij}^{T}$ and $\tilde y_{ij'}^{C}$ introduced above. Each binary response takes value one if the event of interest occurs and zero otherwise. Denote  by $p_i^T$ and $p_i^C$  the event probabilities in the treatment and control groups, respectively, for the $i$th study. Given $p_i^T$, the individual outcomes $y_{ij}^T$ for $j = 1, \dots, n_i^T$ are assumed to be independent and follow Bernoulli distributions. Similarly, given $p_i^C$, the individual outcomes $y_{ij'}^C$ for $j' = 1, \dots, n_i^C$ are assumed to be independent and follow Bernoulli distributions. 
That is,
\beqr\label{prob}
y_{ij}^T|p_i^T \stackrel{\text{i.i.d.}}{\sim}  \mathrm{Bernoulli}(p_i^T) \qquad \text{and} \qquad
y_{ij'}^C|p_i^C \stackrel{\text{i.i.d.}}{\sim}  \mathrm{Bernoulli}(p_i^C).
\eeqr
The study level treatment effect is summarized by lnOR,
\beqr  \label{mu_i}
\mu_i=\ln\frac{p_i^T/(1-p_i^T)}{p_i^C/(1-p_i^C)}.
\eeqr

At the study level, we adopt REM to synthesize the observed lnORs $\{y_i\}_{i=1}^k$,
\begin{equation}\label{lnorrem}
	y_i = \mu_i + \epsilon_i,\quad
	\mu_i \stackrel{\text{i.i.d.}}{\sim} N(\mu, \tau^2), \quad
	\epsilon_i \stackrel{\text{ind}}{\sim} N(0, \sigma_{y_i}^2),
\end{equation}
where $\mu$ denotes the overall treatment effect, $\tau^2$ is the between-study variance, and $\epsilon_i$ represents the within-study sampling error with variance $\sigma_{y_i}^2$.

The continuous and binary outcomes introduced above are defined on the same underlying individuals but are modeled separately, leading to REMs (\ref{smd_rem}) and (\ref{lnorrem}), respectively. In the next subsection, we specify a latent variable mechanism linking the continuous and binary responses at the individual level. Under this formulation, we show that the two random-effects models are intrinsically connected, thereby providing the theoretical basis for extending the proposed heterogeneity measure from SMD to lnOR.

\subsection{Linking REMs for continuous and binary outcomes}

Although REMs in (\ref{smd_rem}) and (\ref{lnorrem}) are formulated separately, they may be intrinsically connected if the binary outcomes arise from underlying latent continuous responses. We therefore introduce a latent variable mechanism linking the two outcome types and investigate the relationship between their corresponding random-effects models.

To achieve this goal, we relate binary outcomes $y^T_{ij}$, $y^C_{ij'}$ in (\ref{prob}) to underlying continuous variables $\tilde y^T_{ij}$, $\tilde y^C_{ij'}$ in (\ref{latent_model}) through a threshold mechanism. Without loss of generality, we define
\beqr\label{trunc}
y^T_{ij}=\begin{cases}
	1, & \text{if } \tilde y^T_{ij}>C_i,\\
	0, & \text{otherwise},
\end{cases}
\qquad\quad
y^C_{ij'}=\begin{cases}
	1, & \text{if } \tilde y^C_{ij'}>C_i,\\
	0, & \text{otherwise},
\end{cases}
\eeqr
where $C_i$ is a study-specific threshold. Under this formulation, continuous and binary outcomes are represented within a common individual-level framework, allowing their corresponding REMs to be compared directly.

The following proposition establishes the relationship between the two REMs in (\ref{smd_rem}) and (\ref{lnorrem}), with the proof given in Appendix A.
\begin{prop}\label{prop1}
	Suppose the binary outcomes satisfy model (\ref{prob}). We assume that there exists underlying latent continuous variables in (\ref{latent_model}), such that the observed binary outcomes are generated from the latent variables via a threshold mechanism (\ref{trunc}). Assume that the error term in (\ref{latent_model}) follows a logistic distribution with mean 0 and variance 1.
	The study-level effect sizes $\mu_i$ in (\ref{mu_i}) for binary outcomes and $\tilde{\mu}_i$ in (\ref{tmu_i}) for latent continuous outcomes satisfy
	\beqr\label{logit}
	\mu_i = \frac{\pi}{\sqrt{3}} \tilde{\mu}_i.
	\eeqr
	Consequently, REM for SMD in (\ref{smd_rem}) induces the corresponding REM for lnOR in (\ref{lnorrem}) with
	\beqr\label{relat}
	\mu= \frac{\pi}{\sqrt{3}} \tilde{\mu}\, \, \, \text{and} \, \, \,
	\tau^2 = \frac{\pi^2}{3}\tilde\tau^2.
	\eeqr
\end{prop}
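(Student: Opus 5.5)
Our plan is to compute $p_i^T$ and $p_i^C$ explicitly under the threshold mechanism (\ref{trunc}), using the standardized logistic form of the error distribution, and then substitute into the definition (\ref{mu_i}) of the lnOR.

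\textbf{Step 1: the error distribution.} A logistic variable with mean $0$ and scale $s$ has variance $s^2\pi^2/3$. Unit variance therefore forces $s=\sqrt{3}/\pi$, so the distribution function of $\xi_{ij}^T$ and $\xi_{ij'}^C$ is
\[
F(x)=\frac{1}{1+\exp(-\pi x/\sqrt{3})}.
\]
By symmetry, $1-F(-x)=F(x)$.

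\textbf{Step 2: the event probabilities.} Since $\sigma_i>0$, the event $\tilde y_{ij}^T>C_i$ is equivalent to $\xi_{ij}^T>C_i/\sigma_i-\tilde\mu_i^T$. Writing $c_i=C_i/\sigma_i$, this gives
\[
p_i^T=1-F(c_i-\tilde\mu_i^T)=F(\tilde\mu_i^T-c_i),
\qquad
p_i^C=F(\tilde\mu_i^C-c_i).
\]
Because the errors are i.i.d.\ within each arm, the $y_{ij}^T$ are i.i.d.\ Bernoulli$(p_i^T)$, and likewise for the control arm. This is consistent with (\ref{prob}).

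\textbf{Step 3: the lnOR.} The logit of $F(x)$ equals $\pi x/\sqrt3$. Hence
\[
\ln\{p_i^T/(1-p_i^T)\}=\pi(\tilde\mu_i^T-c_i)/\sqrt3,
\]
with the analogous expression for the control arm. Taking the difference, the threshold $c_i$ cancels, and we obtain (\ref{logit}): $\mu_i=\pi\tilde\mu_i/\sqrt3$.

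\textbf{Step 4: the induced REM.} If $\tilde\mu_i$ are i.i.d.\ $N(\tilde\mu,\tilde\tau^2)$, then the linear map in (\ref{logit}) makes $\mu_i$ i.i.d.\ $N(\pi\tilde\mu/\sqrt3,\ \pi^2\tilde\tau^2/3)$. This is exactly the random-effects layer of (\ref{lnorrem}) with the parameters in (\ref{relat}). The within-study error structure of (\ref{lnorrem}) is taken as its own sampling model: the claim concerns only the distribution of the true effects.

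\textbf{Main obstacle.} The computation itself is routine. The step that needs care is the cancellation of the threshold in Step 3. It relies on a common threshold $C_i$ and a common scale $\sigma_i$ for both arms within a study, together with the exact linearity of the logit under the logistic link. For any non-logistic error distribution the threshold would not cancel. The point that deserves explicit emphasis is the scale constant $\sqrt3/\pi$ from Step 1, which is forced by the unit-variance normalization.
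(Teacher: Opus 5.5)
Your proposal is correct and follows the paper's proof in Appendix A essentially step for step: you identify the error law as logistic with scale $\sqrt{3}/\pi$, write $p_i^T$ and $p_i^C$ as logistic CDFs at $\tilde\mu_i^T - C_i/\sigma_i$ and $\tilde\mu_i^C - C_i/\sigma_i$, note that the threshold term cancels in the difference of logits, and push the normal random-effects distribution through the linear map $\mu_i=\pi\tilde\mu_i/\sqrt{3}$. Unlike the paper, which only writes out the variance relation, you also state the mean relation $\mu=\pi\tilde\mu/\sqrt{3}$ explicitly.
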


The relationship (\ref{logit}) follows from the logistic distribution assumed for the individual-level errors in (\ref{latent_model}) and is invariant to both the study-specific scale parameter $\sigma_i$ and the threshold $C_i$. Consequently, the transformation from SMD to lnOR depends only on the underlying latent distribution rather than the measurement scale or the choice of threshold.

Equation (\ref{logit}) is consistent with the classical conversion between SMD and lnOR proposed by \cite{chinn2000simple} and subsequently adopted in evidence synthesis \citep{higgins2019cochrane,murad2019continuous}. Proposition~\ref{prop1}, however, goes beyond this well-known conversion by establishing the corresponding relationship between the two random-effects meta-analysis models. In particular, both the overall treatment effect and the between-study variance are transformed through the same linear scaling factor. Consequently, any population-level heterogeneity measure defined through the between-study variance on the latent continuous scale can be transferred directly to the lnOR scale, providing the theoretical foundation for the heterogeneity measure developed in the next subsection.

The proposition does not depend on any particular specification of the study-specific thresholds. Furthermore, when the latent errors follow a standard normal distribution instead of a logistic distribution, the same relationship remains an accurate approximation. The corresponding derivation and numerical evaluation are presented in Appendix B.

\subsection{Absolute measure of heterogeneity for lnOR}
Given the intrinsic relationship between REMs for SMD and lnOR established in Proposition \ref{prop1}, we develop the counterpart of ${\rm ICC}_{\rm MA}$ for SMD in (\ref{smd_rem}) to REM for lnOR in (\ref{lnorrem}).

For the latent continuous outcomes in (\ref{latent_model}), the standardized error terms $\xi_{ij}^T$ and $\xi_{ij'}^C$ are assumed to have unit variance, implying
$\sigma_{\rm pop}^2=1$. Following \cite{yang2025alternative}, the heterogeneity measure for SMD in model (\ref{smd_rem}) is defined as
\beqr\label{lv}
{\rm ICC}_{\rm MA}^{\rm LV}=\frac{\tilde\tau^2}{\tilde\tau^2+1}.
\eeqr
Using the relationship in (\ref{relat}), we have $\tilde\tau^2 = 3\tau^2/\pi^2$. Substituting this expression into (\ref{lv}) yields the corresponding heterogeneity measure for (\ref{lnorrem}):
\beqr\label{iccor}
{\rm ICC}_{\rm MA}^{\rm OR}=\frac{\tau^2}{\tau^2+\pi^2/3}.
\eeqr

In (\ref{iccor}), the population variance in the denominator is a fixed constant. This follows from the latent variable formulation in (\ref{latent_model}) together with the threshold model in (\ref{trunc}), which jointly represent binary outcomes as dichotomized observations of an underlying standardized continuous process. Consequently, the variance of lnORs is fixed and independent of the observed study data. Compared with ${\rm ICC}_{\rm MA}^{\rm LV}$ in (\ref{lv}), whose population variance is 1, the corresponding variance for lnOR in (\ref{iccor}) is larger. This increase arises from the $\pi/\sqrt{3}$ scaling factor introduced when transforming the SMD effect size to the lnOR scale in (\ref{logit}).

Similarly to ${\rm ICC}_{\rm MA}$ proposed by \cite{yang2025alternative} for continuous outcomes, our new ${\rm ICC}_{\rm MA}^{\rm OR}$ has the following desirable properties:
\begin{enumerate}
	\item[a)] \textit{Monotonicity.} ${\rm ICC}_{\rm MA}^{\rm OR}$ increases monotonically with the between-study variance $\tau^2$.
	\item[b)] \textit{Location invariance.} ${\rm ICC}_{\rm MA}^{\rm OR}$ is unaffected by the location of the effect sizes $\mu_i$.
	\item[c)] \textit{Study size invariance.} ${\rm ICC}_{\rm MA}^{\rm OR}$ is not affected by the total number of studies $k$.
	\item[d)] \textit{Sample size invariance.} ${\rm ICC}_{\rm MA}^{\rm OR}$ is not affected by sample size of individual studies.
\end{enumerate}
The fourth property ensures that ${\rm ICC}_{\rm MA}^{\rm OR}$ is invariant to study sample sizes. It can therefore serve as an absolute measure of heterogeneity.

To facilitate interpretation, \cite{yang2026five} adopted the six-level grading scheme proposed by \cite{landis1977measurement} to describe the level of ${\rm ICC}_{\rm MA}$ for meta-analysis with continuous outcomes. We note that ${\rm ICC}_{\rm MA}^{\rm OR}$ has the same variance proportion interpretation, and so the same scheme can be extended to binary outcomes.  Specifically, ${\rm ICC}_{\rm MA}^{\rm OR}=0$ indicates no heterogeneity. Values of 0-0.2, 0.2-0.4, 0.4-0.6, 0.6-0.8, and 0.8-1 correspond to low, moderate, substantial, severe and extreme heterogeneity, respectively. 

\subsection{Estimation of ${\rm ICC}_{\rm MA}^{\rm OR}$}

This section proposes an estimator  of ${\rm ICC}_{\rm MA}^{\rm OR}$ for practical use. For convenience, we denote it by the same notation $I^2_{\rm A}$ as in \cite{yang2025alternative}. With the observed effect sizes $y_i$ and the within-study variances $\sigma^2_{y_i}$ in (\ref{lnorrem}), $I^2_{\rm A}$ is obtained by substituting an estimator of the between-study variance $\tau^2$ into the population-level expression of ${\rm ICC}_{\rm MA}^{\rm OR}$ in (\ref{iccor}). Accordingly, regardless of the estimation method adopted for the between-study variance $\tau^2$, including the DerSimonian–Laird (DL) estimator (\citealp{dersimonian1986meta}), the restricted maximum likelihood (REML) estimator (\citealp{patterson1971recovery,viechtbauer2005bias}), and the Paule–Mandel (PM) estimator (\cite{paule1982consensus}), $I^2_{\rm A}$ can be written in the unified plug-in form
\beqr\label{ia}
I^2_{\rm A}=\frac{\hat\tau^2}{\hat\tau^2+\pi^2/3},
\eeqr
where $\hat\tau^2$ denotes an estimator of the between-study variance $\tau^2$ in (\ref{lnorrem}).

Among estimators of $\tau^2$, the DL method is particularly attractive due to its closed-form expression, which also leads to a simple representation of $I^2_{\rm A}$ in terms of Cochran’s $Q$ statistic. When the DL method is adopted, the between-study variance $\tau^2$ can be estimated as
\beqr\label{tauh}
\hat\tau^2_{\rm DL}=\max\left\{0, \frac{Q-\left(k-1\right)}{\sum_{i=1}^k w_i-\sum_{i=1}^k w_i^2/\sum_{i=1}^k w_i}\right\},
\eeqr
where
\beqrs
Q=\sum_{i=1}^k  w_i\left( y_i-\frac{\sum_{i=1}^k  w_i y_i}{\sum_{i=1}^k  w_i}\right)^2
\eeqrs
is Cochran’s $Q$ statistic and $w_i=1/\sigma^2_{y_i}$ are the inverse-variance weights.
Rewriting the denominator in (\ref{tauh}) as $(k-1)\tilde w$, where $\tilde w=(\sum_{i=1}^k w_i-\sum_{i=1}^k w_i^2/\sum_{i=1}^k w_i)/(k-1)$, we have
\beqrs
\hat\tau^2_{\rm DL}=\max\left\{0, \frac{Q-\left(k-1\right)}{\left(k-1\right)\tilde w}\right\}.
\eeqrs
Substituting this expression into (\ref{ia}) yields
\beqr\label{iadl}
I^2_{\rm A}=\max\left\{0,\frac{Q-\left(k-1\right)}{Q+\left(k-1\right)\left(\tilde w\pi^2/3-1\right)}\right\},
\eeqr

For comparison, the conventional $I^2$ statistic is
\beqr{\label{i2}}
I^2=\max\left\{0,\frac{Q-\left(k-1\right)}{Q}\right\}.
\eeqr
Comparing $I^2$ in (\ref{i2}) with $I^2_{\rm A}$ in (\ref{iadl}), both statistics share the same numerator $Q-(k-1)$ but differ in their denominators. In particular, $I^2$ uses $Q$, whereas $I^2_{\rm A}$ includes an additional term $(k-1)\left(\tilde w\pi^2/3 -1\right)$. Since $\tilde w$ is the adjusted mean inverse-variance weight, it increases as within-study variances decrease, i.e., as study sample sizes grow and estimates become more precise. As a result, the denominator of $I^2_{\rm A}$ is larger than that of $I^2$, yielding systematically smaller values of $I^2_{\rm A}$. This indicates that the conventional $I^2$ may tend to overestimate the heterogeneity between study populations.

\section{Numerical results}

This section evaluates the performance and interpretability of the proposed heterogeneity statistic $I^2_A$ through simulation studies and demonstrates its application by the motivating example introduced in Section 2.

\subsection{Performance of $I^2_{\rm A}$ under different settings}

To investigate the performance of $I^2_A$ for binary outcomes using lnOR as the effect size, we conduct simulations under varying event rates, numbers of studies, and degrees of the between-study heterogeneity.

The between-study heterogeneity is set to $\tau^2=0.3$ and $3$, corresponding to ${\rm ICC}_{\rm MA}^{\rm OR}\approx0.084$ and $0.477$. For each study, $\mu_i^C \sim N(\mu^C,\tau^2/2)$ and $\mu_i^T \sim N(\mu^T,\tau^2/2)$, inducing $\mu_i=\mu_i^T-\mu_i^C$ with variance $\tau^2$. Event counts are generated from binomial models with $n=100$, and lnORs are computed from the resulting $2\times2$ tables. Random-effects meta-analyses are performed, with $\tau^2$ estimated by the DerSimonian–Laird estimator (DL), the restricted maximum likelihood estimator (REML), and the Paule–Mandel estimator (PM), respectively, and $I^2_A$ is computed based on $\hat{\tau}^2$ by formula (\ref{ia}).

We consider two levels of between-study heterogeneity, characterized by $\tau^2=0.3$ and $\tau^2=3$, corresponding to ${\rm ICC}_{\rm MA}^{\rm OR}=0.3/(0.3+\pi^2/3)\approx0.084$ and ${\rm ICC}_{\rm MA}^{\rm OR}=3/(3+\pi^2/3)\approx0.477$, respectively.
For each study $i=1,\ldots,k$, the study-specific log odds for the control and treatment groups are generated as $\mu_i^C \sim N(\mu^C,\tau^2/2)$ and $\mu_i^T \sim N(\mu^T,\tau^2/2)$. The resulting lnOR $\mu_i=\mu_i^T-\mu_i^C$ has between-study variance $\tau^2$.

Given the study-specific log odds $\mu_i^C$ and $\mu_i^T$, the corresponding event probabilities are obtained via the inverse logit transformation, $p_i^C=\{1+\exp(-\mu_i^C)\}^{-1}$ and $p_i^T=\{1+\exp(-\mu_i^T)\}^{-1}$. Event counts in each arm are then generated from binomial distributions with sample size $n=100$. The observed lnORs are computed from the resulting $2\times2$ tables, and random-effects meta-analysis is performed using the DL, REML, and PM estimators for $\tau^2$. The heterogeneity measure $I^2_{\rm A}$ in (\ref{ia}) is calculated based on each estimate $\hat\tau^2$.

For each combination of $(k, \tau^2, \mu^T, \mu^C)$, we conduct $M=1000$ replications. Results are summarized using Chauvenet-type boxplots (Lin et al., 2025) of the estimated $I^2_A$ across different control event rates $p^C$. Figure \ref{fig1} reports results for the null-effect scenario with $p^T=p^C$, while Figure \ref{fig2} corresponds to the positive-effect scenario with $p^T=p^C+0.1$. The corresponding true overall lnORs for $p^C=0.1,0.2,0.3,0.4,0.5$ are $0.811, 0.539, 0.442, 0.405,$ and $0.405$, respectively.
\begin{figure}
	\begin{center}
		\begin{tabular}{cc}
			\psfig{figure=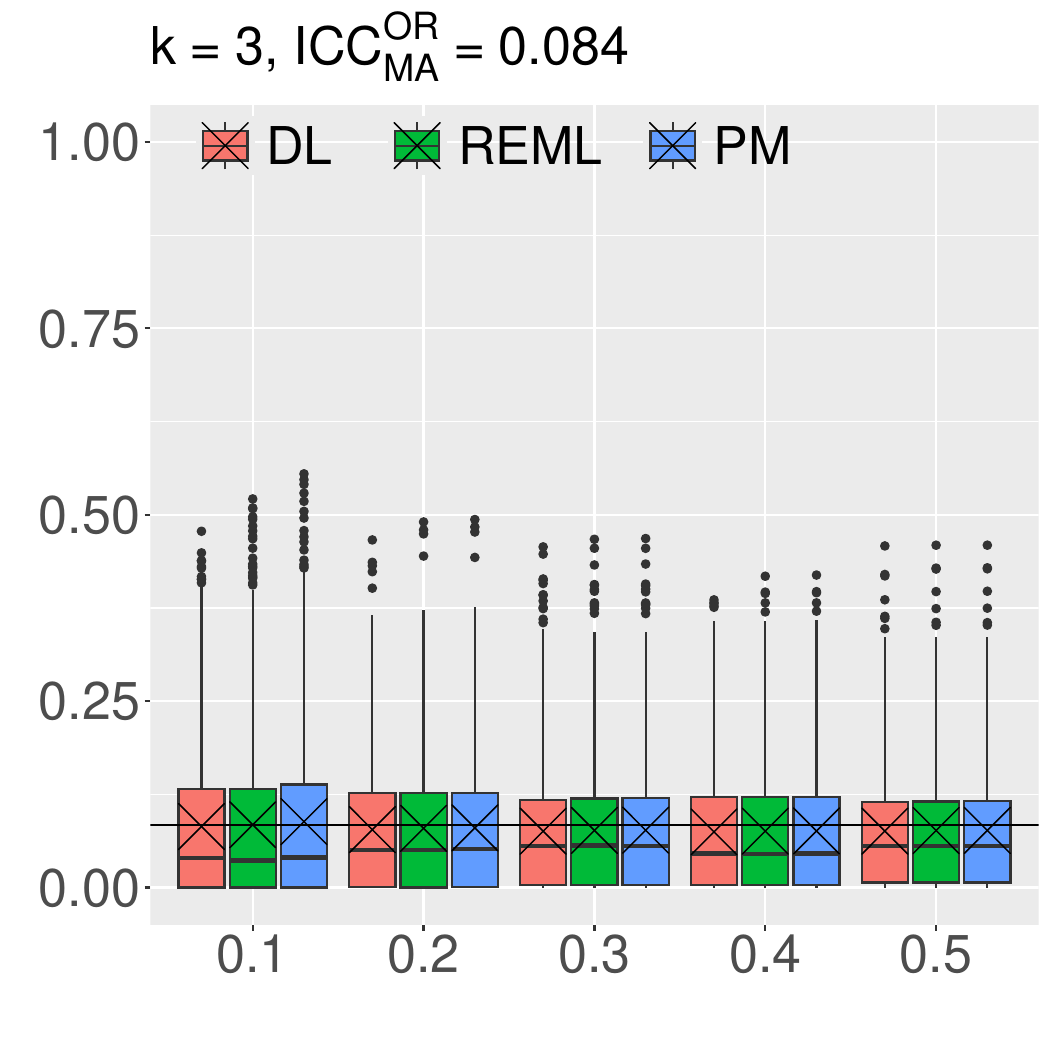,width=3.0in,angle=0}&
			\psfig{figure=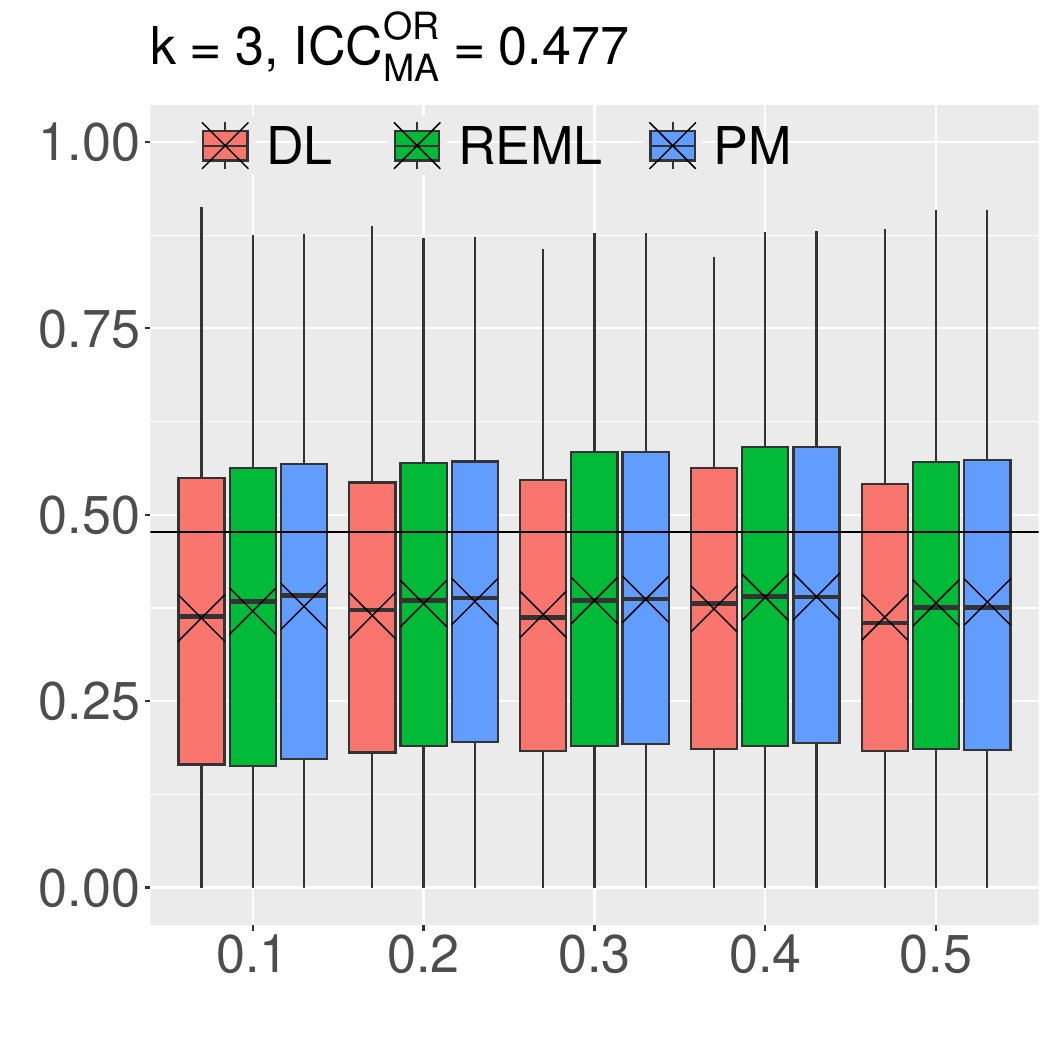,width=3.0in,angle=0}\\
			\psfig{figure=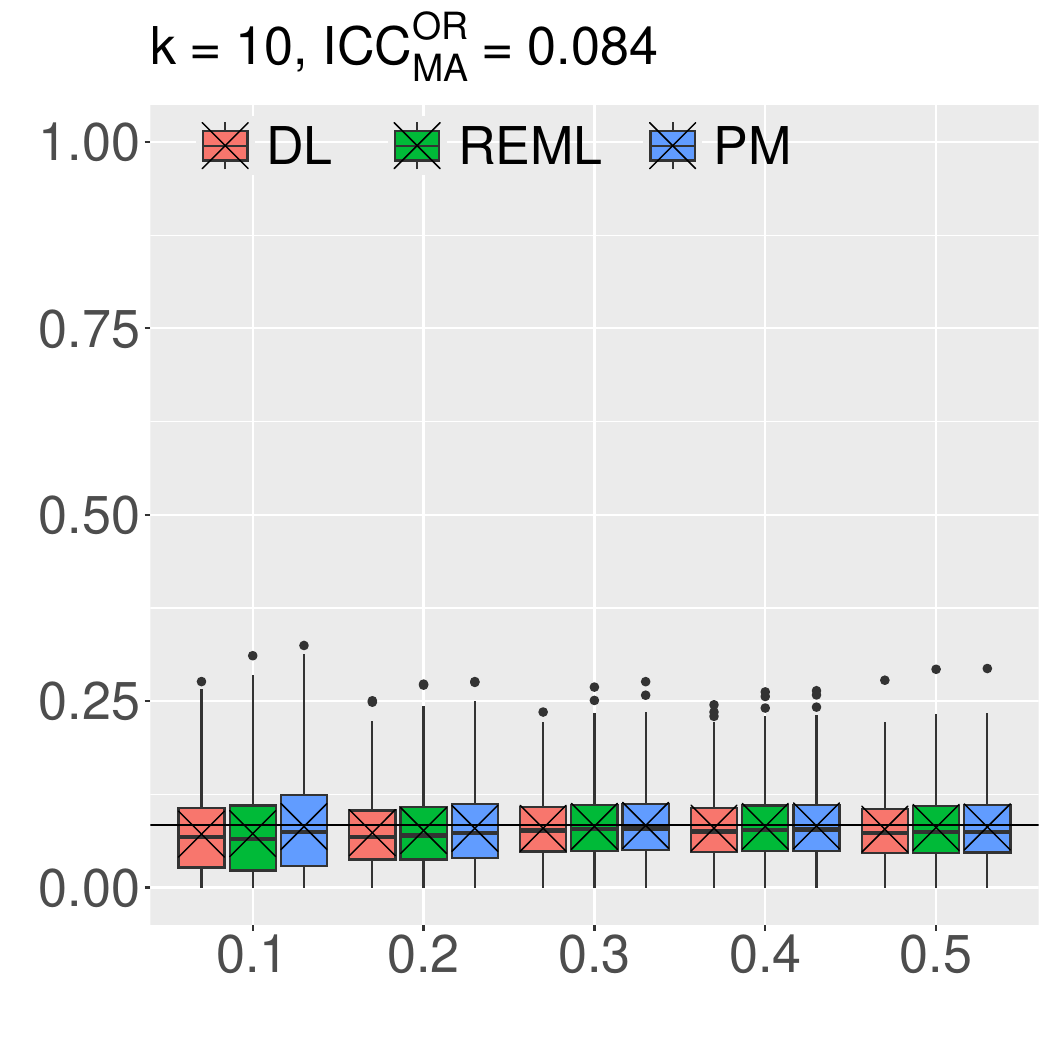,width=3.0in,angle=0}&
			\psfig{figure=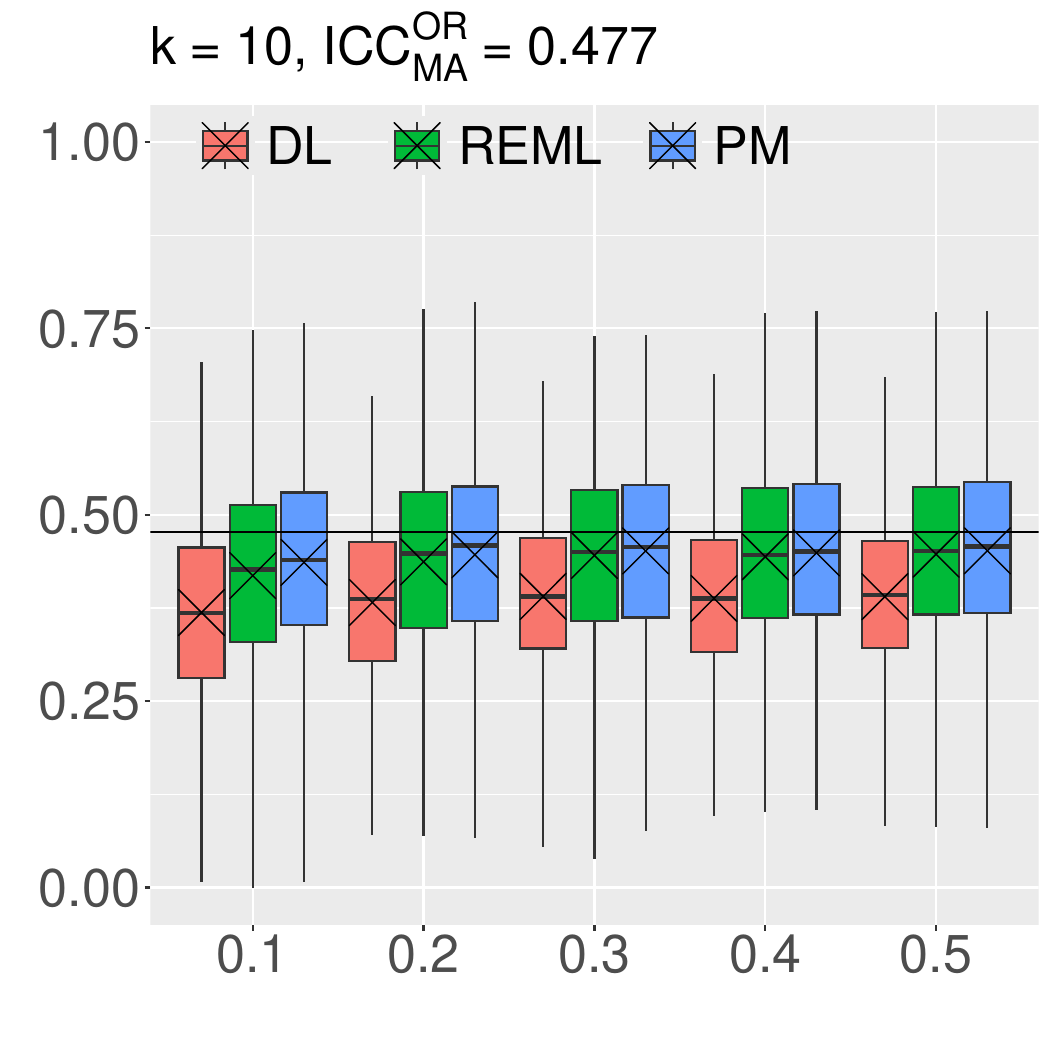,width=3.0in,angle=0}
		\end{tabular}
		\vspace{5mm}
		{\caption{Boxplots of the $I^2_{\rm A}$ statistics under the null-effect scenario, based on the DL (red boxes), REML (green boxes), and PM (blue boxes) estimators of $\tau^2$. Each cross indicates the mean of 1,000 replications, and the solid horizontal lines denote the true heterogeneity level ${\rm ICC}_{\rm MA}^{\rm OR}$.}\label{fig1}}
	\end{center}
\end{figure}
\begin{figure}[htp!]
	\begin{center}
		\begin{tabular}{cc}
			\psfig{figure=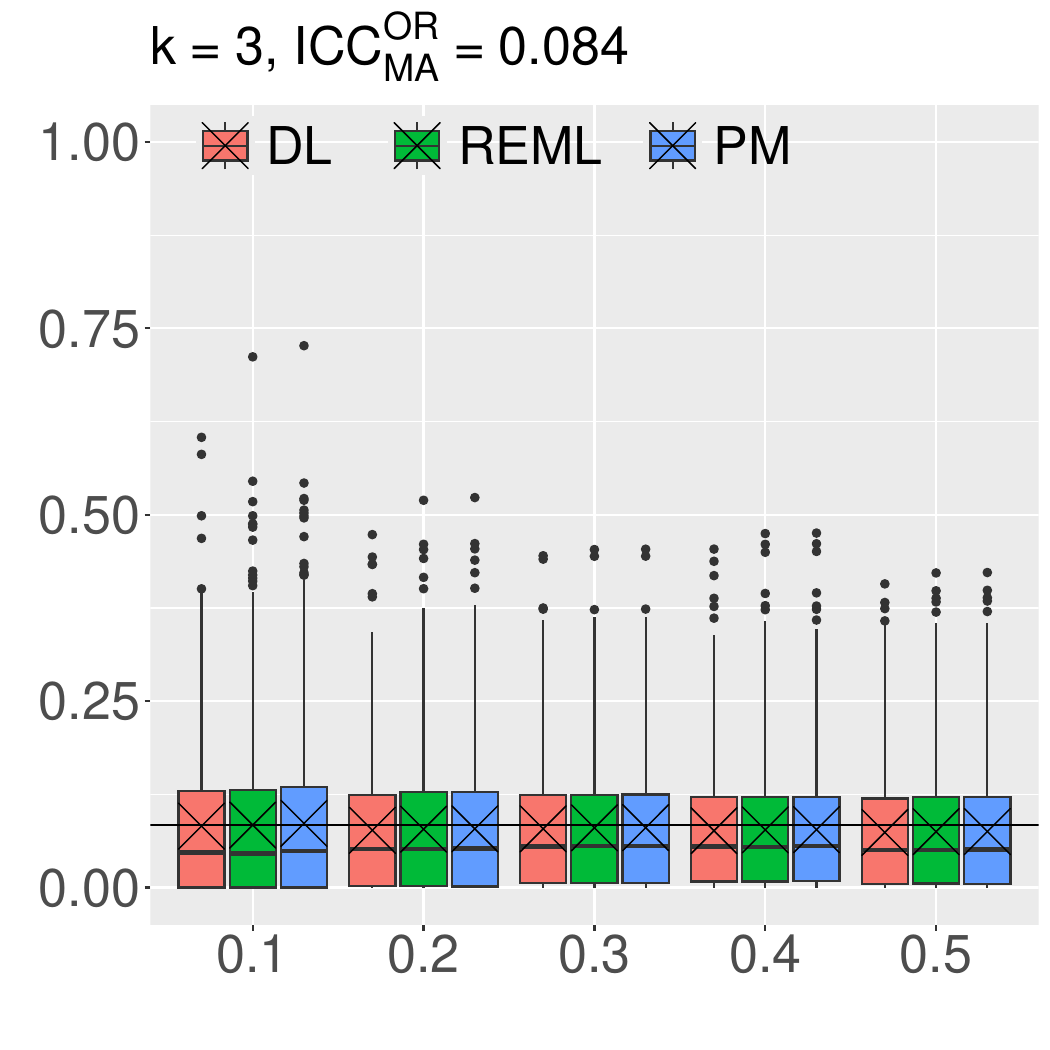,width=3.0in,angle=0}&
			\psfig{figure=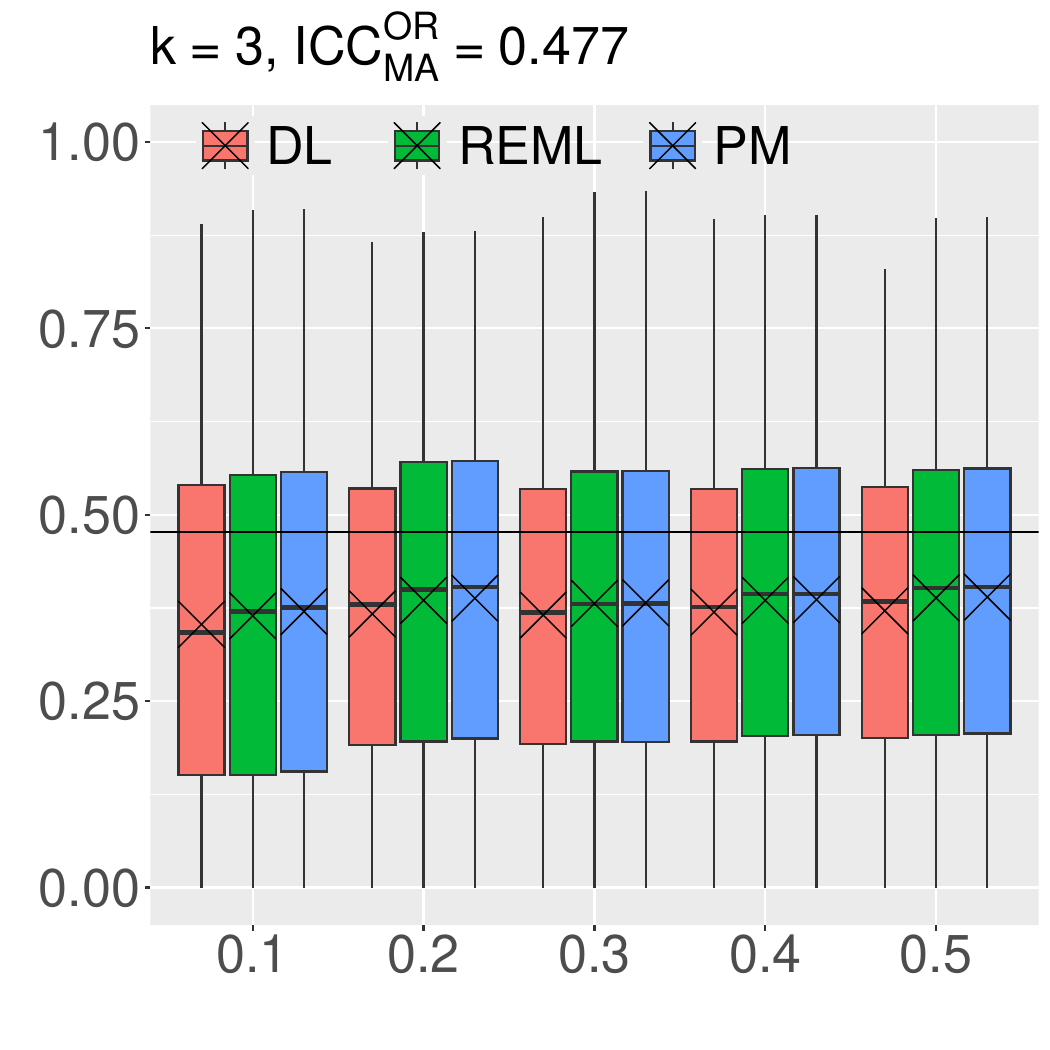,width=3.0in,angle=0}\\
			\psfig{figure=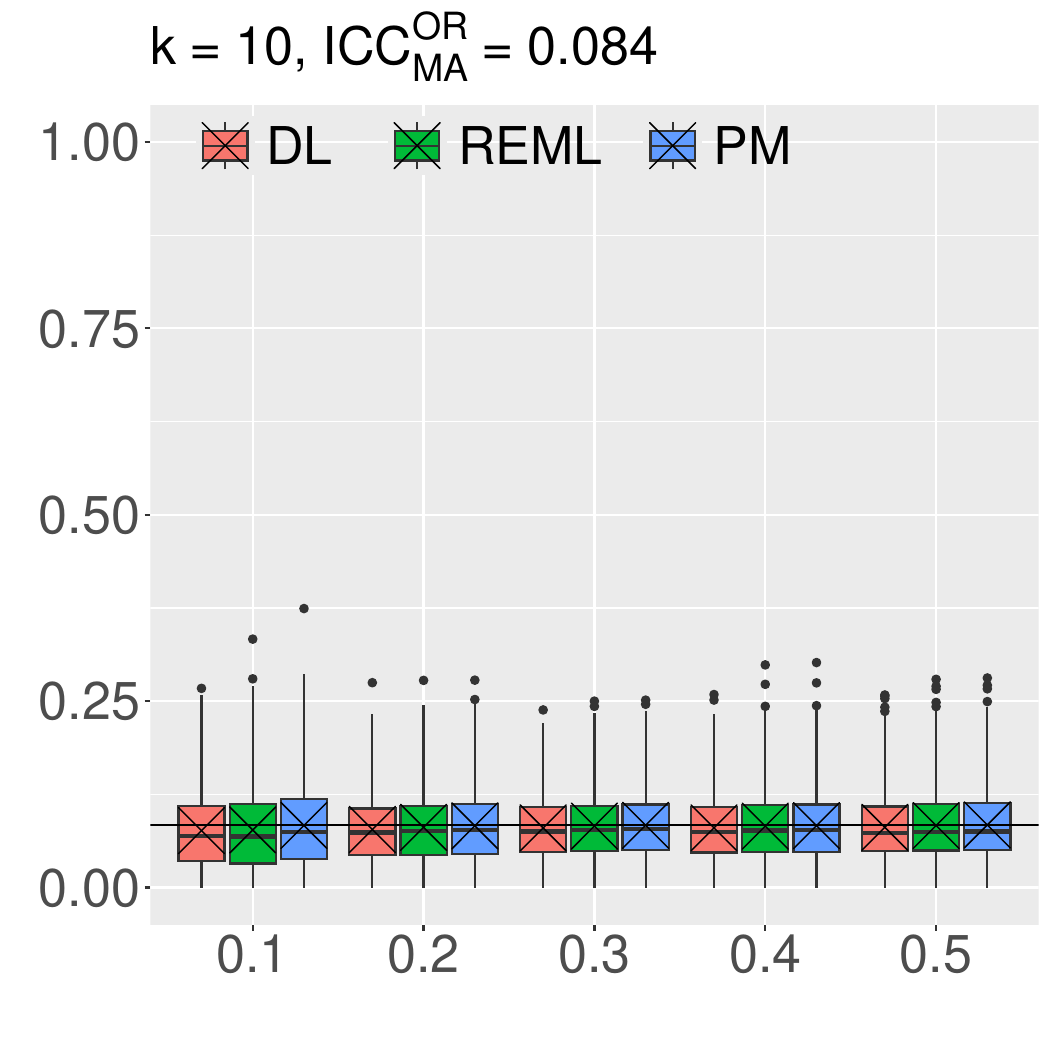,width=3.0in,angle=0}&
			\psfig{figure=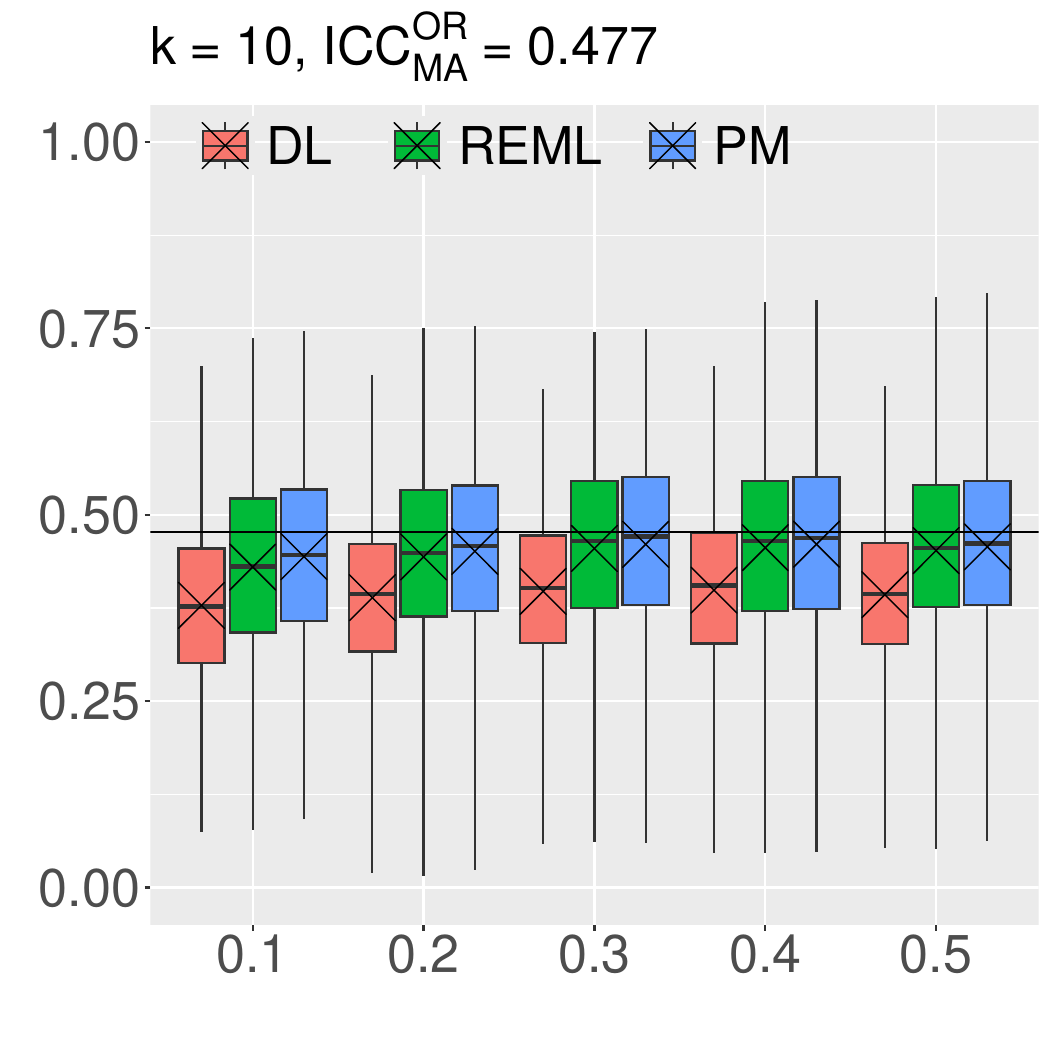,width=3.0in,angle=0}
		\end{tabular}
		\vspace{5mm}
		{\caption{Boxplots of the $I^2_{\rm A}$ statistics under the positive-effect scenario, based on the DL (red boxes), REML (green boxes), and PM (blue boxes) estimators of $\tau^2$. Each cross indicates the mean of 1,000 replications, and the solid horizontal lines denote the true heterogeneity level ${\rm ICC}_{\rm MA}^{\rm OR}$.}\label{fig2}}
	\end{center}
\end{figure}

From Figure \ref{fig1}, when the true between-study variance $\tau^2$ is small, the $I^2_{\rm A}$ statistic shows lower bias and variability. Increasing the number of studies from $k=3$ to $k=10$ further reduces both bias and variance, indicating improved estimation accuracy with larger study sample sizes. The overall event rate $p^C$ has only a minor impact, with slightly improved performance when $p^C$ is close to 0.5.
Among different estimators of $\tau^2$, the DL-based $I^2_{\rm A}$ generally exhibits smaller variance but larger bias than the REML- and PM-based versions, especially under large heterogeneity. Although its bias decreases as $k$ increases, it remains noticeable even when $k=10$, whereas the PM-based estimator performs best, yielding nearly unbiased estimates for larger $k$.
Similar patterns are observed in Figure \ref{fig2}, suggesting that the magnitude of the effect size has little influence on the performance of $I^2_{\rm A}$ in this setting.

\subsection{Consistency of $I^2_{\rm A}$ across latent and binary outcomes}

To further evaluate the consistency of $I^2_A$ between the latent outcomes and their corresponding binary outcomes, we conduct simulations based on the latent-variable model defined in equation (\ref{latent_model}).

For each study $i=1,\ldots,k$, latent outcomes for the control and treatment groups are generated with overall means $\tilde{\mu}^C=\tilde{\mu}^T=0$. Two levels of between-study heterogeneity are considered, characterized by $\tilde{\tau}^2=0.09$ and $0.9$, corresponding to ${\rm ICC}_{\rm MA}^{\rm LV}=0.083$ and $0.474$, respectively. The study-specific deviations are independently drawn as $\tilde{\delta}_i^C,\tilde{\delta}_i^T \sim N(0,\tilde{\tau}^2/2)$, yielding a study-specific treatment effect $\tilde{\mu}_i=\tilde{\mu}^T+\tilde{\delta}_i^T-(\tilde{\mu}^C+\tilde{\delta}_i^C)$ with variance $\tilde{\tau}^2$. Study-specific scaling is introduced via $\sigma_i \sim U(0.5,1.5)$.

We consider meta-analysis with $k=3$ or $k=10$ studies to represent small and large scenarios. For each study, the treatment and control arms have equal sample sizes $n_i^T=n_i^C=i\times n$, where $i=1,\ldots,k$ and $n\in\{10,\ldots,90\}$. Individual-level residual errors $\xi_{ij}^C$ and $\xi_{ij}^T$ are independently generated under two distributions: logistic (mean 0, variance 1) and standard normal. Binary outcomes are obtained by dichotomizing latent responses at a fixed threshold $C_i=0$ according to model (\ref{trunc}), and event counts are recorded for each arm.

For each simulated dataset, we compute the heterogeneity statistic $I^2_A$ for both continuous and binary outcomes. For continuous outcomes, study-level means and standard deviations are employed to obtain SMDs and their within-study variances via Hedges' $g$ (\citealp{hedges1981distribution}), while for binary outcomes, lnORs and their variances are calculated from the $2\times2$ tables. Random-effects meta-analysis is then performed separately for SMDs and lnORs using the PM estimator to obtain $\tilde{\tau}^2$ and $\tau^2$, respectively. The corresponding $I^2_A$ statistics are then computed based on (\ref{lv}) and (\ref{iccor}).

For each combination of $(k,\tilde\tau^2,n)$, we conduct $M=1000$ replications. Results are summarized in Figures \ref{fig3} and \ref{fig4} using Chauvenet-type boxplots. Each figure reports $I^2_A$ for both latent and binary outcomes to assess their agreement, with reference ${\rm ICC}_{\rm MA}$ values based on continuous outcomes.
\begin{figure}[htp!]
	\begin{center}
		\begin{tabular}{cc}
			\psfig{figure=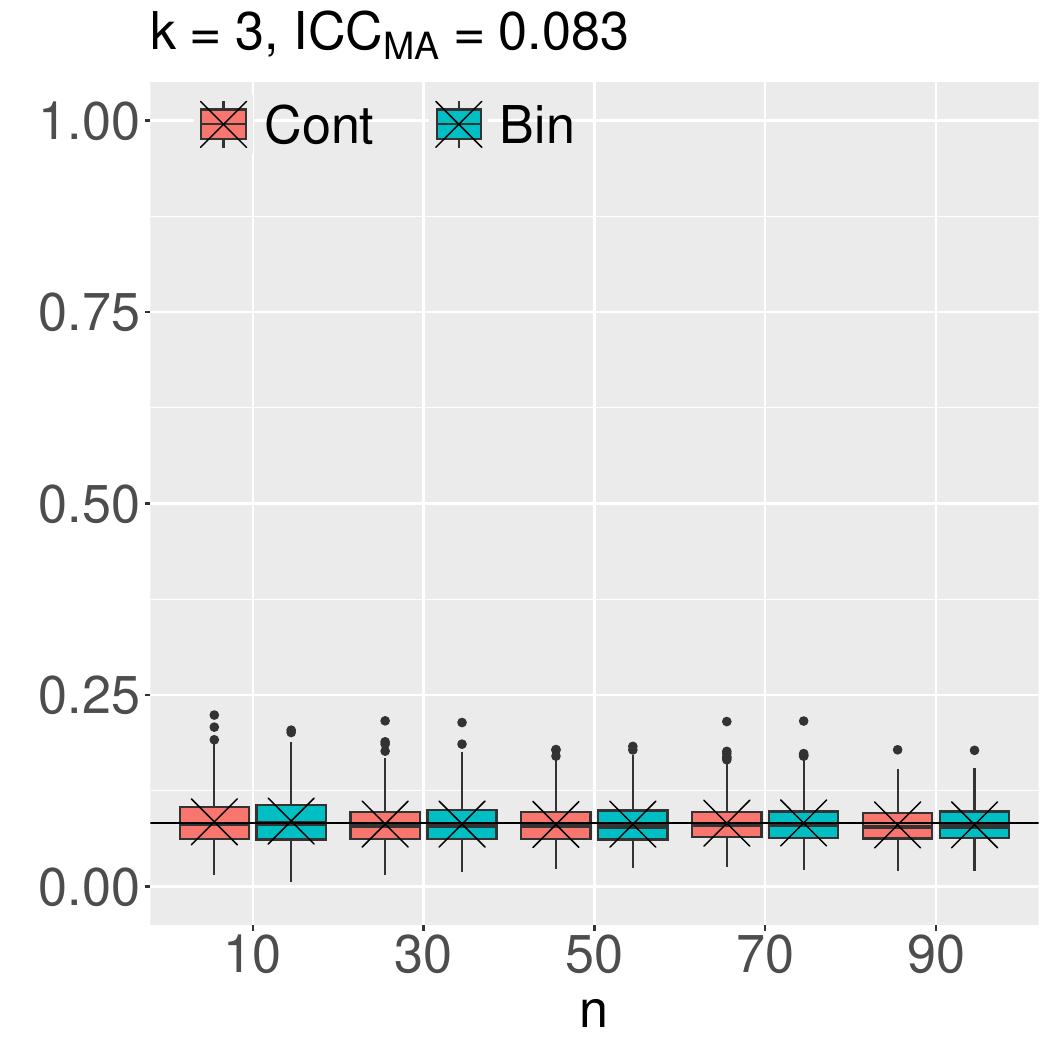,width=3.0in,angle=0}&
			\psfig{figure=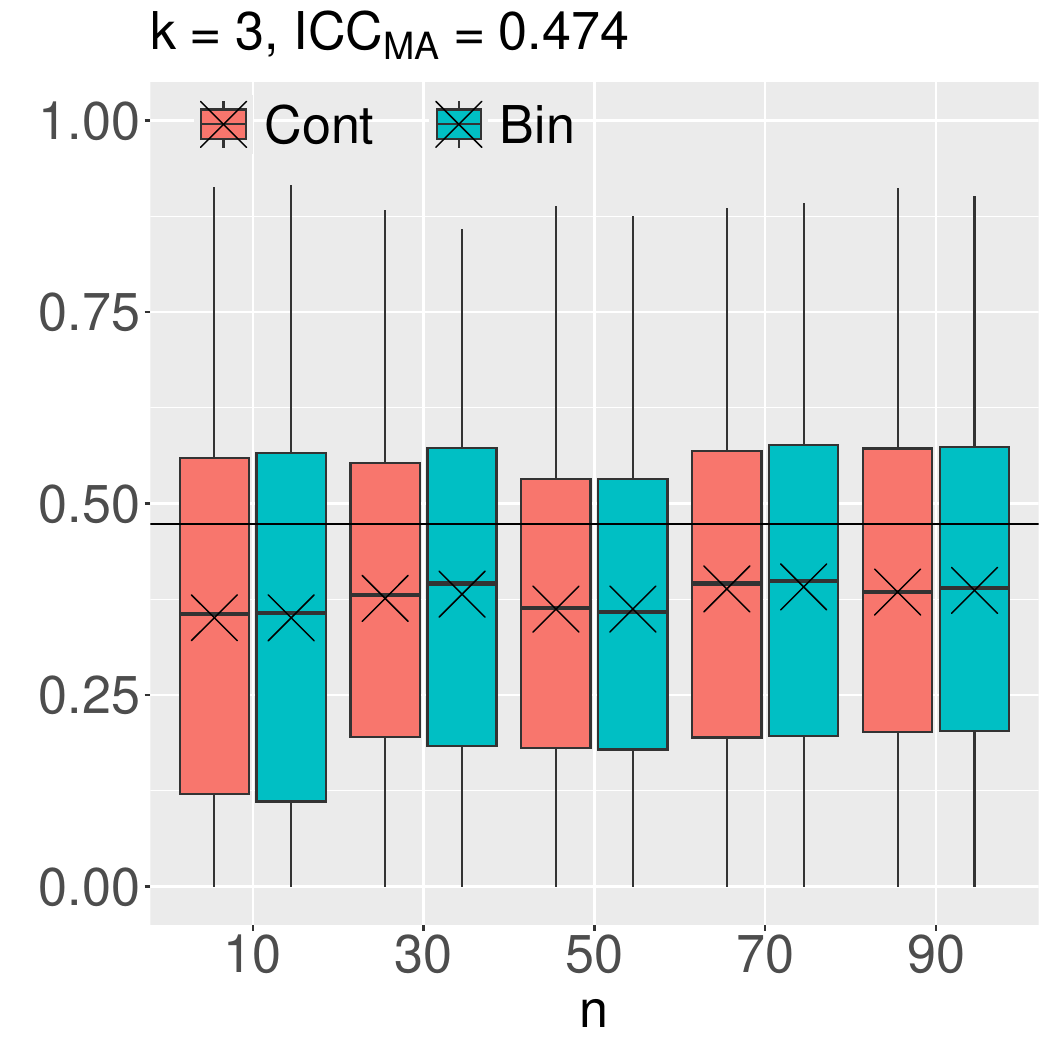,width=3.0in,angle=0}\\
			\psfig{figure=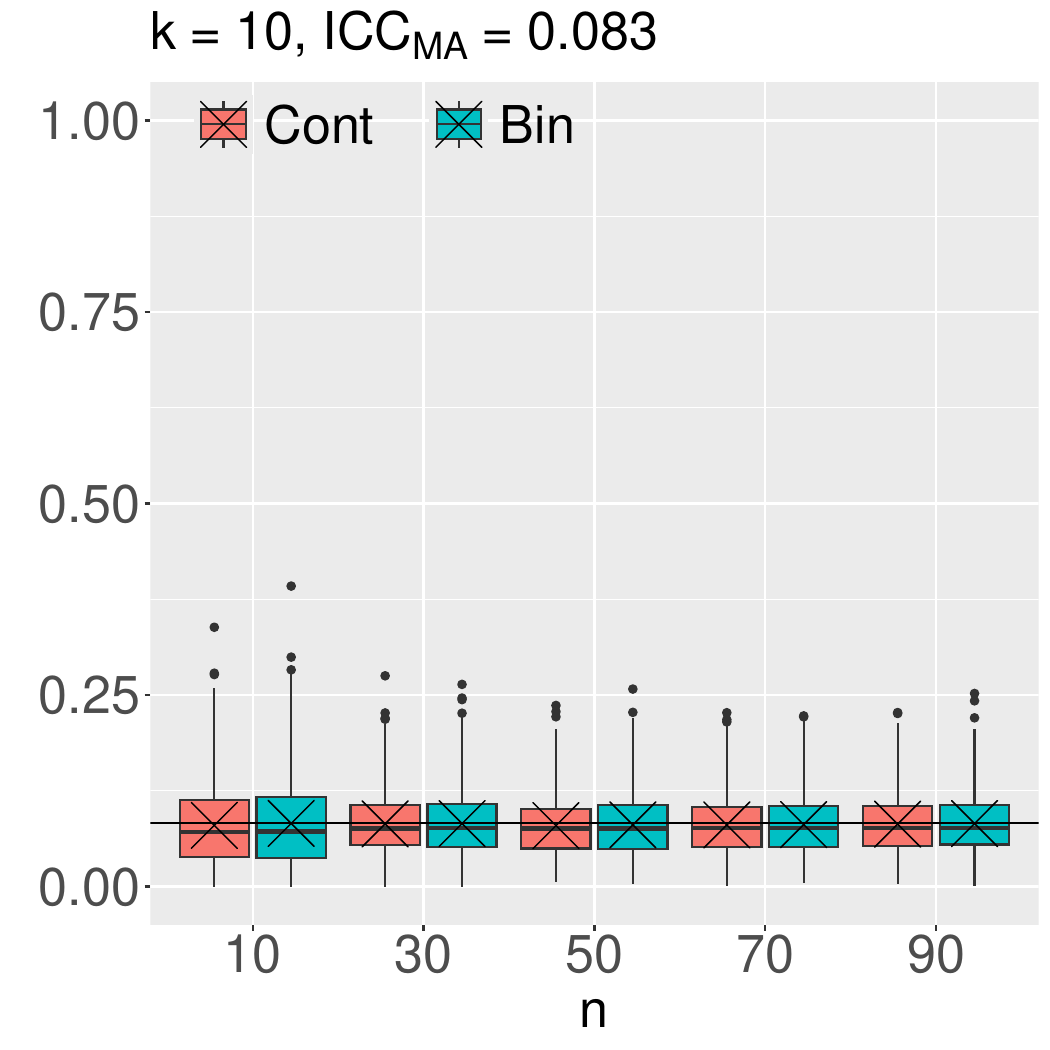,width=3.0in,angle=0}&
			\psfig{figure=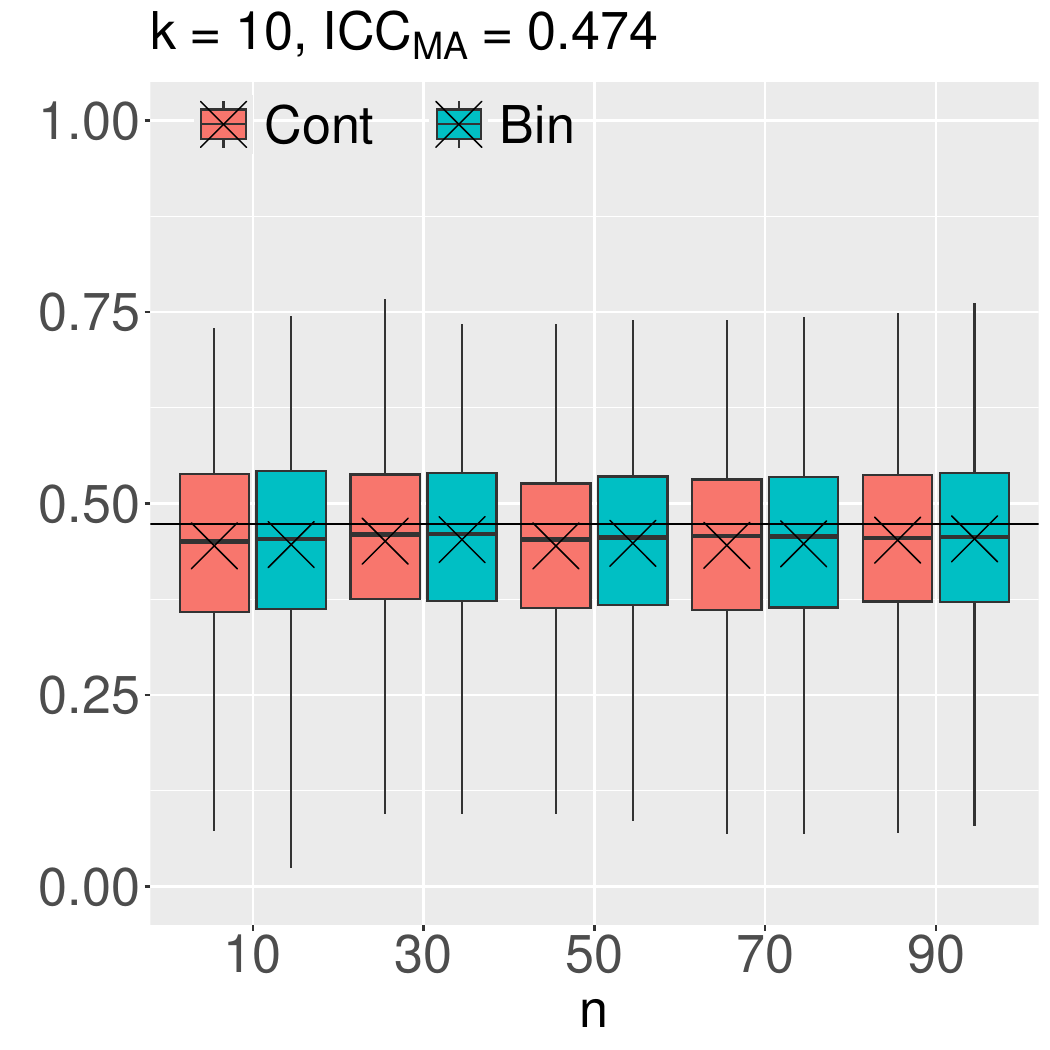,width=3.0in,angle=0}
		\end{tabular}
		\vspace{5mm}
		{\caption{Boxplots of the $I^2_{\rm A}$ statistics under the latent outcomes (red boxes) and the corresponding binary outcomes (blue boxes) with logistic residual errors. Each cross represents the mean across 1,000 replications, and the solid horizontal line denotes the true heterogeneity level ${\rm ICC}_{\rm MA}^{\rm LV}$.}\label{fig3}}
	\end{center}
\end{figure}
\begin{figure}[htp!]
	\begin{center}
		\begin{tabular}{cc}
			\psfig{figure=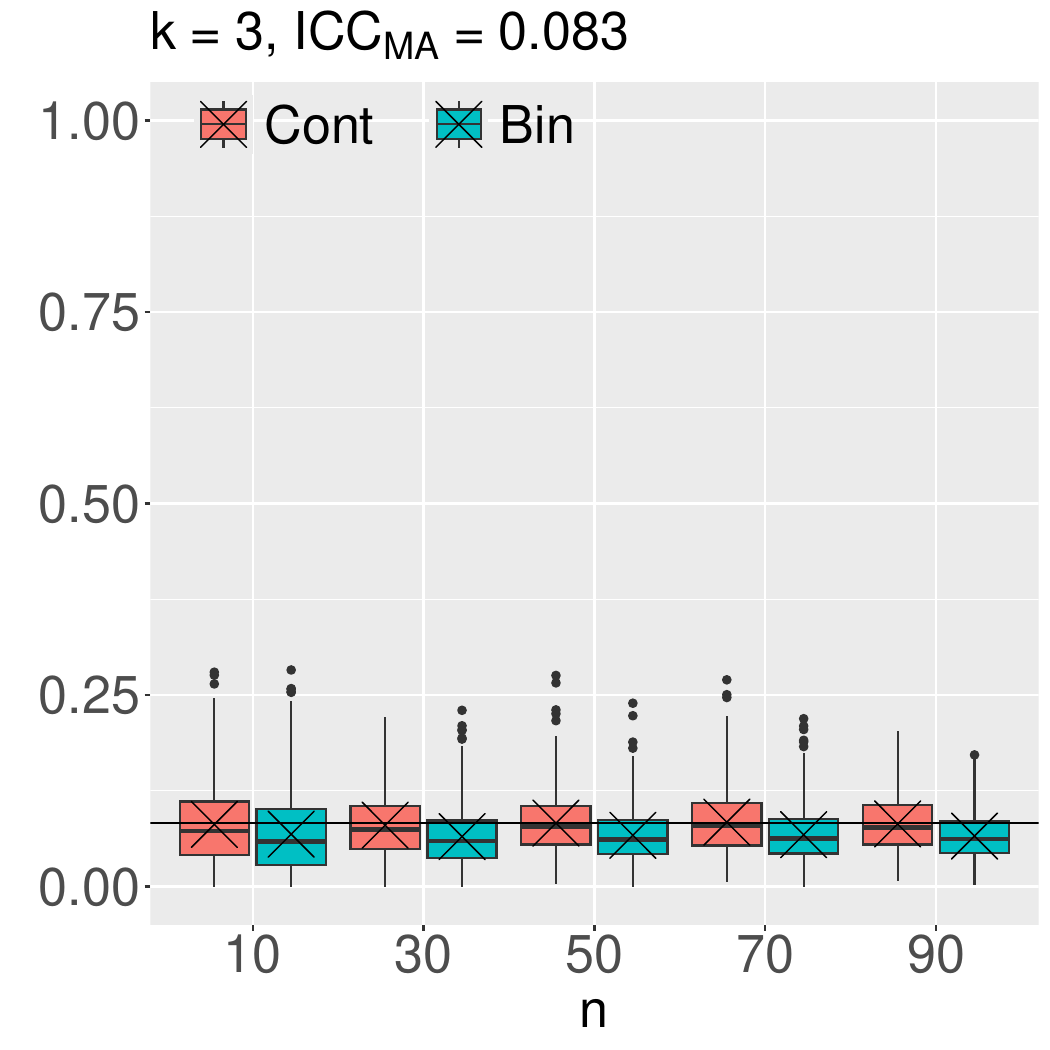,width=3.0in,angle=0}&
			\psfig{figure=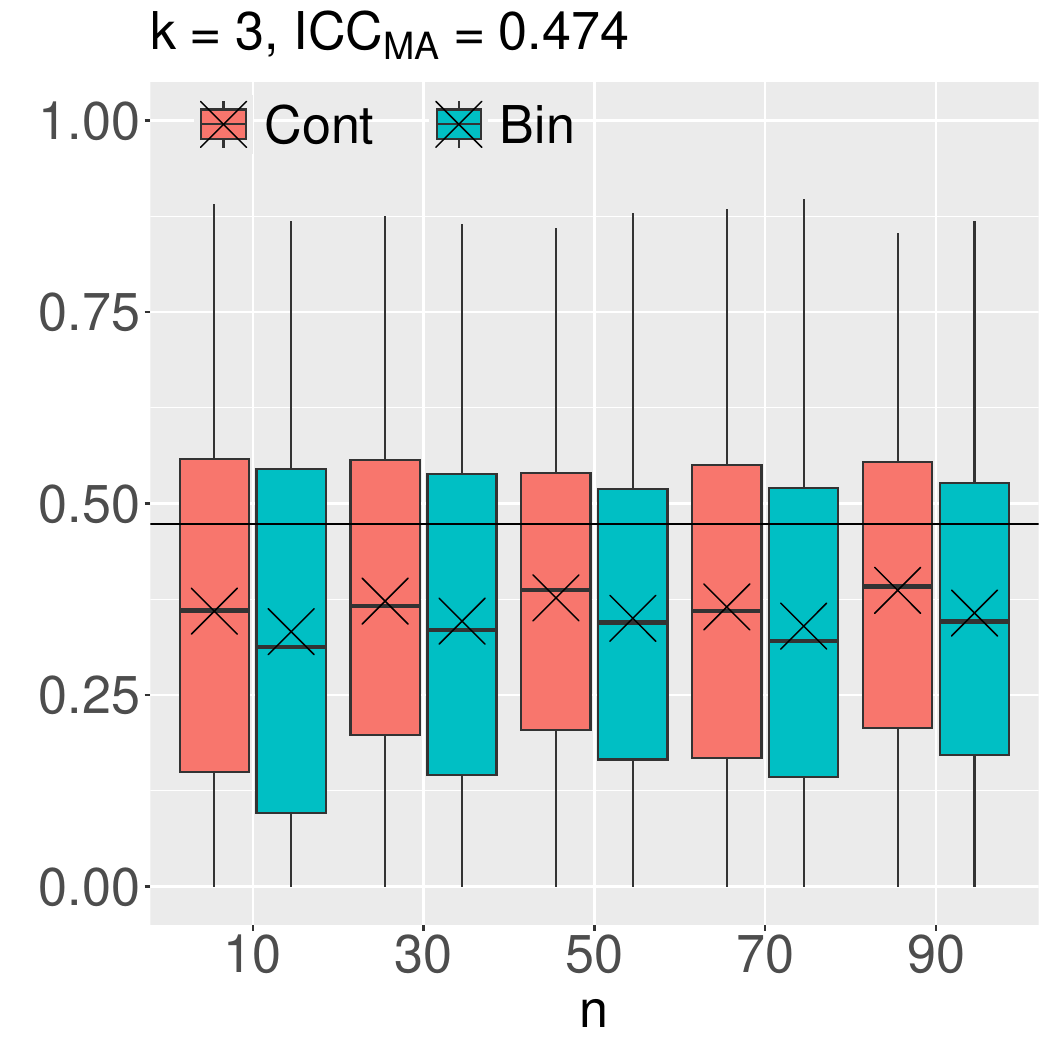,width=3.0in,angle=0}\\
			\psfig{figure=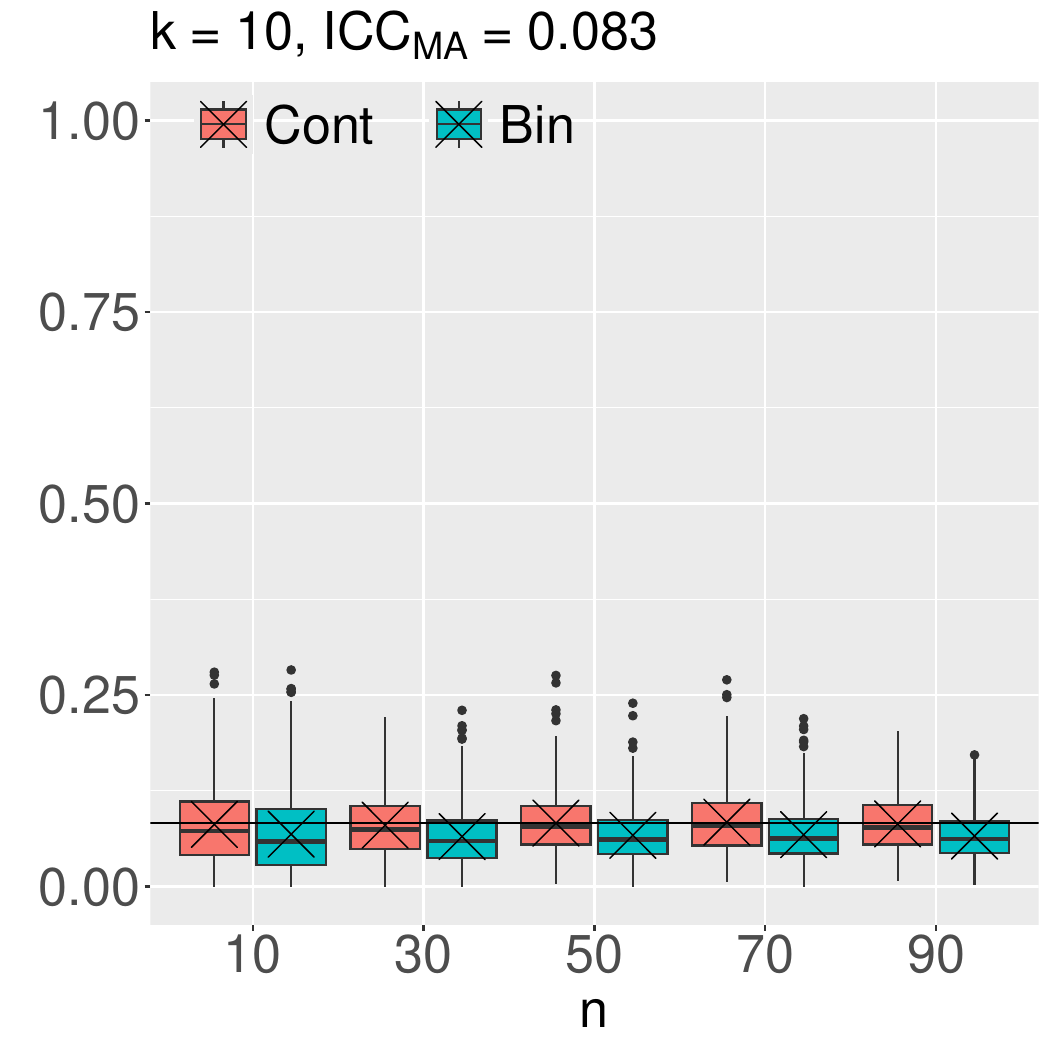,width=3.0in,angle=0}&
			\psfig{figure=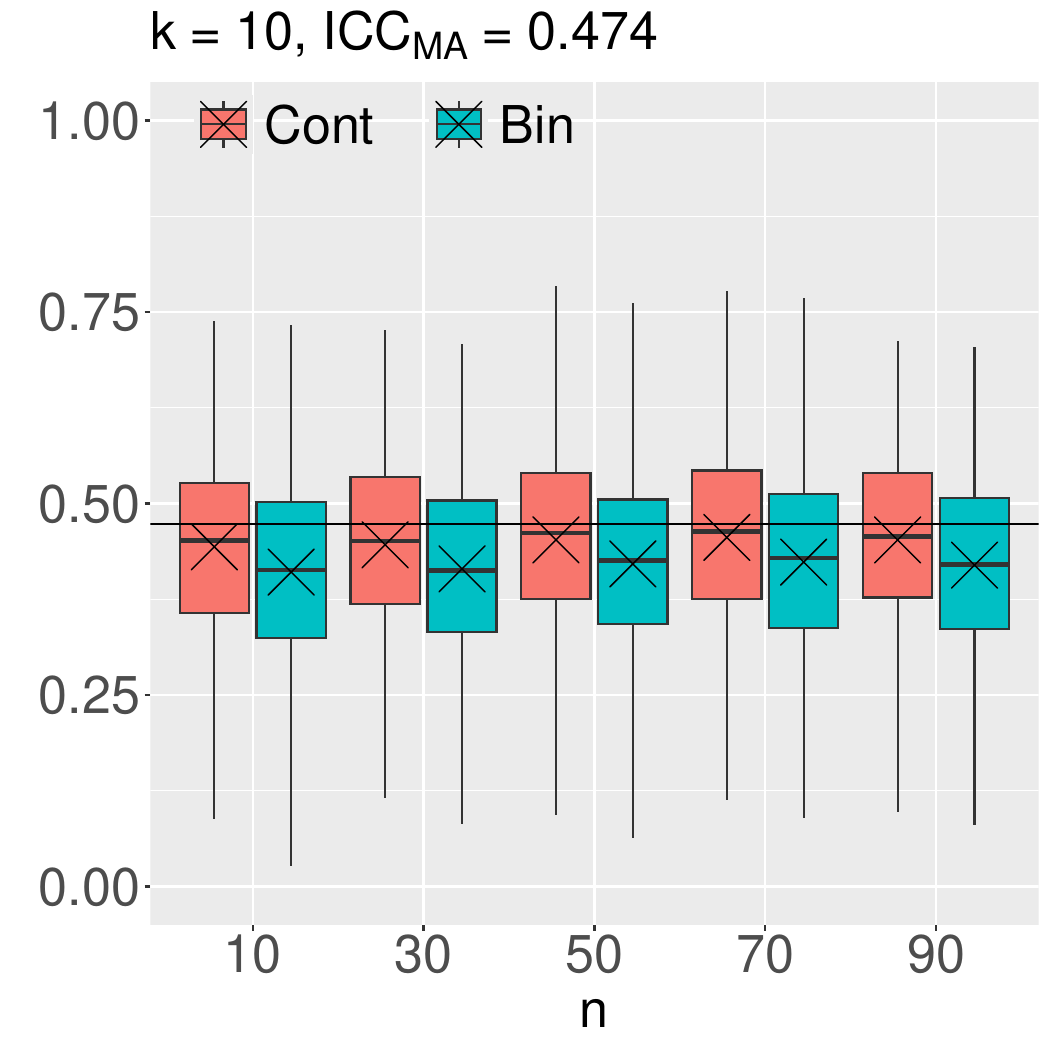,width=3.0in,angle=0}
		\end{tabular}
		\vspace{5mm}
		{\caption{Boxplots of the $I^2_{\rm A}$ statistics under the latent outcomes (red boxes) and the corresponding binary outcomes (blue boxes) with standard normal residual errors. Each cross represents the mean across 1,000 replications, and the solid horizontal line denotes the true heterogeneity level ${\rm ICC}_{\rm MA}^{\rm LV}$.}\label{fig4}}
	\end{center}
\end{figure}

Figure \ref{fig3} reports $I^2_A$ under logistic errors for continuous outcomes. The results show that $I^2_A$ for binary outcomes closely matches that for continuous outcomes across all settings. When the within-study sample size is small, the variance of $I^2_A$ for binary outcomes is slightly larger, but this difference diminishes as the number of studies increases. Overall, increasing $k$ yields nearly unbiased estimates for both outcome types, indicating consistent heterogeneity interpretation across binary and continuous outcomes.
Figure \ref{fig4} presents results under normal errors. Continuous outcomes show similar patterns as under logistic errors, while binary outcomes exhibit slightly larger bias but comparable variance. This bias decreases with $k$, though a small discrepancy remains. These findings suggest that even under normal errors, ${\rm ICC}_{\rm MA}$ for binary outcomes remains close to that of continuous outcomes, supporting the robustness and interpretability of $I^2_A$ across outcome types.

\subsection{Revisiting the motivating example}
In this section, we apply the proposed heterogeneity measure to reanalyze the motivating example in Section 2, which comprises five randomized controlled trials assessing the effectiveness of depression treatment. As each study reports both continuous and binary outcomes, this dataset allows a direct comparison of the heterogeneity estimates derived from different outcome types within the same meta-analytic framework.

For continuous outcomes, SMD is employed as the effect size. For each study, the observed SMD and its within-study variance are computed, followed by a random-effects meta-analysis using the PM estimator for $\tilde\tau^2$. The estimated between-study variance is $\widehat{\tilde\tau}^2(\rm SMD)=0.042$, leading to a heterogeneity statistic
\beqrs
I^2_A({\rm SMD})=\frac{0.042}{0.042+1}=0.040.
\eeqrs
For the binary outcomes, lnOR serves as the effect size. Using the observed lnORs and their within-study variances, the PM estimator yielded $\hat\tau^2({\rm lnOR})=0.088$, corresponding to
\beqrs
I^2_A({\rm lnOR})=\frac{0.088}{0.088+\pi^2/3}=0.026.
\eeqrs

To further examine the consistency between continuous and binary analyses, each study's SMD and its within-study variance are transformed to lnOR scale by the classical SMD-to-lnOR conversion formula: $\mu_i=\pi\tilde{\mu}_i/\sqrt{3}$ and $\sigma_{y_i}^2=\pi^2\sigma_{\tilde{y}_i}^2/3$,
where $\tilde\mu_i$ and $\sigma_{\tilde y_i}^2$ denote the study-specific effect size and its within-study variance on continuous outcome scale, while $\mu_i$ and $\sigma_{y_i}^2$ represent the corresponding transformed lnOR and its within-study variance for the $i$th study (\citealp{murad2019continuous}).
A subsequent random-effects meta-analysis based on the converted lnORs produced an estimated $\hat\tau^2({\rm clnOR})=0.138$, yielding
\beqrs
I^2_A({\rm clnOR})=\frac{0.138}{0.138+\pi^2/3}=0.040.
\eeqrs

Figure \ref{fig5} summarizes the results of the three random-effects meta-analyses. The left panel displays the study-specific effect sizes with their 95\% confidence intervals (CIs) for the three outcome types, with the last row showing the pooled overall estimates and corresponding CIs. The right panel presents the corresponding forest plot.
\begin{figure}[htbp]
	\centering
	\includegraphics[width=1\textwidth]{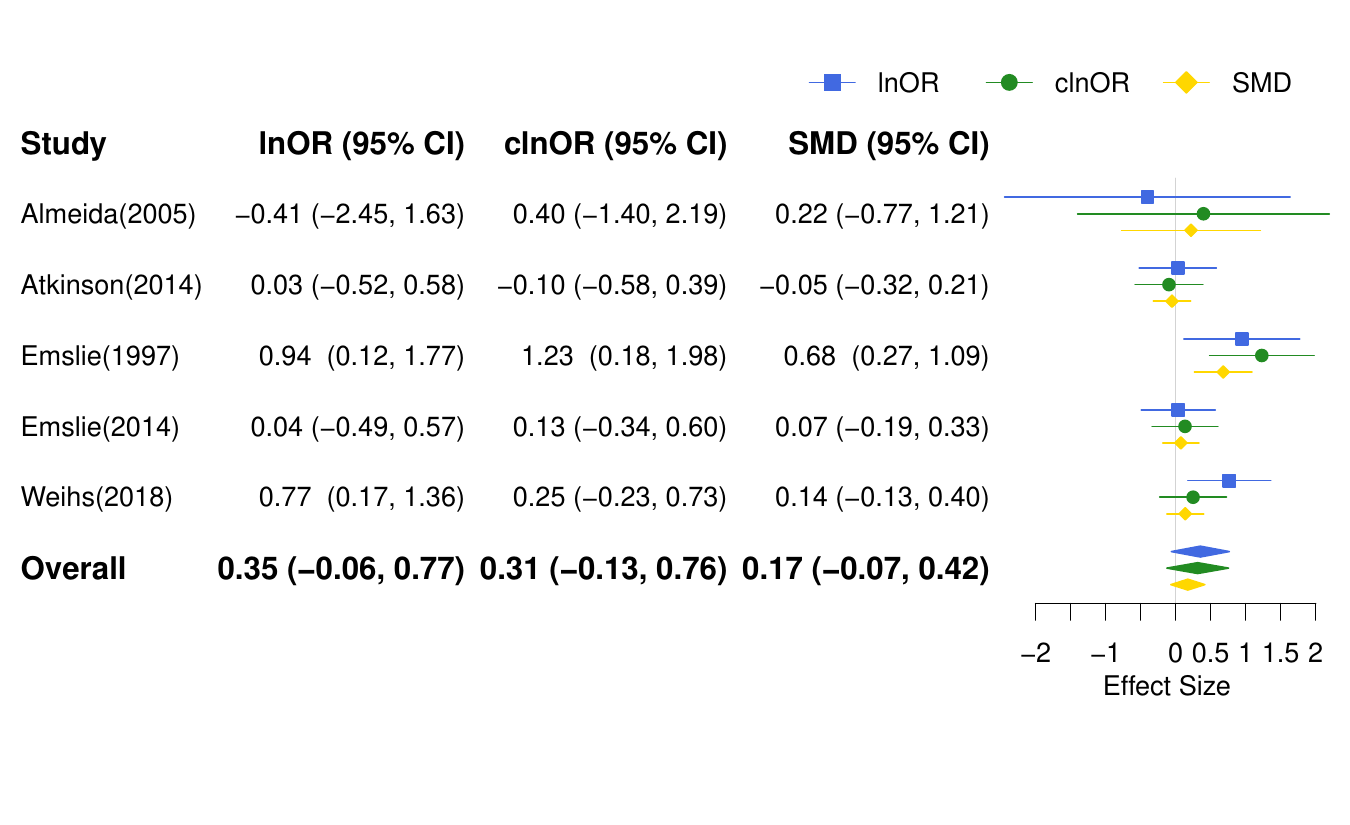}
	\caption{Results and forest plot of three random-effects meta-analyses based on different outcome types. lnORs are represented by blue squares, converted lnORs (clnOR) by green circles, and SMDs by yellow diamonds.}
	\label{fig5}
\end{figure}

From the meta-analytic results and forest plot, we observe that although SMD and the converted lnOR (clnOR) are expressed on different effect size scales, they are derived from the same underlying data. As a result, their between-study variance estimates, 0.0421 for SMD and 0.138 for clnOR, differ in magnitude; however, the corresponding heterogeneity measures, both equal to 0.040, remain identical. This finding indicates that the transformation of the effect size type does not influence the absolute heterogeneity captured by the proposed $I^2_A$.

In contrast, both lnOR and the converted lnOR are expressed on the same scale. The converted lnOR is obtained by transferring SMD onto lnOR metric, which facilitates a direct comparison with lnORs calculated from binary data. Although the two originate from different outcome measures, they both describe the same participants' level of depression, and thus are expected to exhibit a certain degree of consistency. The results confirm this expectation: despite minor differences in the study-level effect sizes, the estimation precision is comparable across studies, and the pooled overall estimates are similar in both magnitude and precision. Moreover, the between-study variances are both small, 0.088 for lnOR and 0.138 for the converted lnOR, resulting in similarly low heterogeneity measures of 0.026 and 0.040, respectively. These results further demonstrate the consistency and interpretability of the proposed $I^2_A$ across different effect size formulations.

\section{Conclusion and Discussion}

In meta-analysis, the heterogeneity reflects the extent to which the overall effect
represents the included studies. The presence of heterogeneity challenges the common effect assumption and influences model selection between the fixed-effect model and REM. Under REM, large heterogeneity implies that the pooled effect may not adequately represent individual studies. In such cases, further analyses, such as subgroup meta-analysis or meta-regression, are recommended to explore potential sources of heterogeneity (\citealp{higgins2019cochrane}). Recently, \cite{yang2025alternative} introduced an absolute measure of heterogeneity, ${\rm ICC}_{\rm MA}$, that ranges between 0 and 1. This measure is not influenced by study sample sizes and directly measures the heterogeneity between the study populations involved in the meta-analysis. More recently, \cite{yang2026five} proposed a six-level descriptive grading scheme for interpreting ${\rm ICC}_{\rm MA}$ and its practical estimator, $I_A^2$. Their approach, based on an ANOVA framework, focused on continuous outcomes.

In this paper, we developed the counterpart of ${\rm ICC}_{\rm MA}$ to meta-analysis with binary outcomes, for which lnOR is adopted as the effect size, and further proposed an absolute measure of the between-study heterogeneity,
\beqrs
{\rm ICC}_{\rm MA}^{\rm OR}=\frac{\tau^2}{\tau^2+\pi^2/3}.
\eeqrs
This measure is defined as the ratio of the between-study variance $\tau^2$ of the study-specific lnOR values to the sum of $\tau^2$ and a typical population variance $\pi^2/3$, which arises naturally from a latent variable model underlying the binary outcomes. Specifically, the combination of study-specific thresholds and scale parameters in the latent-variable formulation effectively standardizes the latent outcomes, ensuring that the population variance in the denominator is constant across meta-analysis and independent of the observed study data. This parallels the definition of ${\rm ICC}_{\rm MA}$ for continuous outcomes using SMD as the effect size, where the population variance is 1. In the current setting of lnOR, an additional scaling factor of $\pi/\sqrt{3}$ is introduced to account for the log-odds transformation. Consequently, the heterogeneity measure ${\rm ICC}_{\rm MA}^{\rm OR}$ for meta-analysis with binary outcomes is well defined, interpretable, and scale-invariant in terms of the latent outcomes.

Similar to ${\rm ICC}_{\rm MA}$ proposed for meta-analysis with continuous outcomes by \cite{yang2025alternative}, ${\rm ICC}_{\rm MA}^{\rm OR}$ possesses several desirable characteristics. Specifically, it is (a) monotonically increasing with respect to the between-study variance $\tau^2$, (b) invariant to the location of the effect sizes, (c) independent of the total number of studies included, and (d) unaffected by sample sizes of individual studies. These properties justify its designation as an absolute measure of heterogeneity. In practice, we can compute the corresponding heterogeneity statistic $I^2_A$ by substituting an estimator of $\tau^2$ into ${\rm ICC}_{\rm MA}^{\rm OR}$, yielding
\beqrs
I^2_A=\frac{\hat\tau^2}{\hat\tau^2+\pi^2/3}.
\eeqrs
Our simulation studies indicated that among widely applied estimators, the Paule-Mandel method provides the most reliable performance for estimating ${\rm ICC}_{\rm MA}^{\rm OR}$, yielding the PM-based $I^2_A$ for practical use. To facilitate interpretation, we adopt the same grading scheme proposed by \cite{yang2026five} for meta-analysis with continuous outcomes. In this classification, $I^2_A=0$ reflects no heterogeneity, while values of 0-0.20, 0.20-0.40, 0.40-0.60, 0.60-0.80, and above 0.80 correspond to low, moderate, substantial, severe, and extreme heterogeneity, respectively. This graded scheme allows researchers to assess the practical magnitude of heterogeneity when using $I^2_A$ for meta-analysis with binary outcomes.

Through extensive simulation experiments and an application to the motivating real-data example, we demonstrated that the proposed ${\rm ICC}_{\rm MA}^{\rm OR}$ and the corresponding $I^2_A$ statistic coincide with their counterparts for continuous outcomes based on SMD. Overall, our findings provide a practical and theoretically grounded approach to quantifying the heterogeneity in meta-analysis with binary outcomes, complementing existing measures for continuous outcomes and enhancing interpretability in evidence synthesis.

For meta-analysis with binary outcomes, the natural log of the relative risks (lnRR) is also frequently employed as an alternative effect size to lnOR. The two measures are related by the formula $\ln{\rm OR}=\ln{\rm RR}+\ln\left\{(1-p_0)+p_0\exp\left(\ln{\rm OR}\right)\right\}$,
where $p_0$ denotes the event rate in the control group (\citealp{zhang1998s}). When $p_0$ is small, lnRR and lnOR are approximately equal. In such cases, the proposed $I^2_A$ statistic can still be applied. However, for general cases, this approximation may not hold. Therefore, further research is needed to extend the $I^2_A$ statistic to meta-analysis with lnRR as the effect size, enabling a more general application of the absolute measure of heterogeneity with binary outcomes.

\bibliographystyle{apalike} 
\bibliography{ref} 

\newpage
\begin{appendices}
	\setcounter{figure}{0}
	\renewcommand{\thefigure}{S\arabic{figure}}
	\setcounter{table}{0}
	\renewcommand{\thetable}{S\arabic{table}}
	\setcounter{equation}{0}
	\renewcommand{\theequation}{S\arabic{equation}}
	\renewcommand\thesection{Appendix~\Alph{section}}
	
\section{Proof of Proposition \ref{prop1}}\label{appA}

\begin{proof}
	When $\xi_{ij}^T,\xi_{ij'}^C\sim {\rm Logistic}(0,\sqrt{3}/\pi)$, it follows that $\pi/\sqrt{3}\cdot\xi_{ij}^T,\pi/\sqrt{3}\cdot\xi_{ij'}^C\sim {\rm Logistic}(0,1)$. From the cumulative distribution function (CDF) of the standard logistic distribution, we have
	\beqrs
	p_i^T&=&{\rm Pr}\left\{y_{Tij}=1\right\}={\rm Pr}\left\{\tilde y_{Tij}>C_i\right\}\\
	&=&{\rm Pr}\left\{\frac{\pi}{\sqrt{3}}\xi_{ij}^T>\frac{\pi}{\sqrt{3}}\left\{\frac{C_i}{\sigma_i}-(\tilde\mu^T+\tilde\delta_i^T)\right\}\right\}\\
	&=&\frac{1}{1+\exp\left[\frac{\pi}{\sqrt{3}}\left\{-\left(\tilde\mu^T+\tilde\delta_i^T\right)+\frac{C_i}{\sigma_i}\right\}\right]}.
	\eeqrs
	Similarly,
	\beqrs
	p_i^C=\frac{1}{1+\exp\left[\frac{\pi}{\sqrt{3}}\left\{-\left(\tilde\mu^C+\tilde\delta_i^C\right)+\frac{C_i}{\sigma_i}\right\}\right]}.
	\eeqrs
	Therefore, the log odds ratio for the $i$th study can be expressed as
	\beqrs
	\mu_i=\ln\frac{p_i^T/\left(1-p_i^T\right)}{p_i^C/\left(1-p_i^C\right)}&=&\frac{\pi}{\sqrt{3}}\left\{\left(\tilde\mu^T+\tilde\delta_i^T\right)-\frac{C_i}{\sigma_i}\right\}-\frac{\pi}{\sqrt{3}}\left\{\left(\tilde\mu^C+\tilde\delta_i^C\right)-\frac{C_i}{\sigma_i}\right\}\nonumber\\
	&=&\frac{\pi}{\sqrt{3}}\tilde\mu_i.
	\eeqrs
	Consequently,
	\beqrs
	\tilde\tau^2={\rm Var}\left(\tilde\mu_i\right)=\frac{3}{\pi^2}{\rm Var}\left(\mu_i\right)=\frac{3}{\pi^2}\tau^2.
	\eeqrs
\end{proof}

\section{Proof of the approximated relationship between REMs for SMD and lnOR with normal errors}\label{appB}

\begin{proof}
	When $\xi_{ij}^T,\xi_{ij'}^C\sim N(0,1)$, from the CDF of the standard normal distribution,
	\beqrs
	p_i^T={\rm Pr}\left\{\tilde y_{Tij}>C_i\right\}={\rm Pr}\left\{\xi_{ij}^T>\frac{C_i}{\sigma_i}-(\tilde\mu^T+\tilde\delta_i^T)\right\}=\Phi\left(\tilde\mu^T+\tilde\delta_i^T-\frac{C_i}{\sigma_i}\right),
	\eeqrs
	where $\Phi(\cdot)$ is the CDF of standard normal distribution.
	Similarly,
	\beqrs
	p_i^C=\Phi\left(\tilde\mu^C+\tilde\delta_i^C-\frac{C_i}{\sigma_i}\right).
	\eeqrs
	Therefore, the log odds ratio for the $i$th study can be expressed as
	\beqrs
	\mu_i&=&\ln\frac{p_i^T/\left(1-p_i^T\right)}{p_i^C/\left(1-p_i^C\right)}=\ln\frac{p_i^T}{1-p_i^T}-\ln\frac{p_i^C}{1-p_i^C}\\
	&=&\ln\frac{\Phi\left(\tilde\mu^T+\tilde\delta_i^T-\frac{C_i}{\sigma_i}\right)}{1-\Phi\left(\tilde\mu^T+\tilde\delta_i^T-\frac{C_i}{\sigma_i}\right)}-\ln\frac{\Phi\left(\tilde\mu^C+\tilde\delta_i^C-\frac{C_i}{\sigma_i}\right)}{1-\Phi\left(\tilde\mu^C+\tilde\delta_i^C-\frac{C_i}{\sigma_i}\right)}.
	\eeqrs
	Using the commonly adopted approximation that replaces the standard normal CDF by the CDF of a standardized logistic distribution with the same mean and variance
	\beqr\label{approx}
	\Phi\left(x\right)\approx\frac{1}{1+\exp\left(-\sqrt{\pi^2/3}x\right)},
	\eeqr
	we obtain
	\beqr\label{approxe}
	\mu_i\approx\frac{\pi}{\sqrt{3}}\left(\tilde\mu^T+\tilde\delta_i^T-\frac{C_i}{\sigma_i}\right)-\frac{\pi}{\sqrt{3}}\left(\tilde\mu^C+\tilde\delta_i^C-\frac{C_i}{\sigma_i}\right)=\frac{\pi}{\sqrt{3}}\tilde\mu_i.
	\eeqr
	Hence,
	\beqrs
	\tilde\tau^2={\rm Var}\left(\tilde\mu_i\right)\approx\frac{3}{\pi^2}{\rm Var}\left(\mu_i\right)=\frac{3}{\pi^2}\tau^2.
	\eeqrs
\end{proof}

Below, we further examine the approximation performance of the logistic-normal relationship.
When the individual-level latent continuous outcome follows a normal distribution, the approximate relationship (11) between the between-study variances $\tau^2$ for binary outcomes and $\tilde{\tau}^2$ for latent SMD arises from the logistic approximation to the standard normal CDF in (\ref{approx}). Essentially, this approximation replaces the standard normal distribution of the latent continuous outcome with a logistic distribution having the same mean and variance. Let 
\beqrs
L(x)=\frac{1}{1+\exp\left(-\sqrt{\pi^2/3}x\right)}.
\eeqrs
Web Figure \ref{figa1} illustrates the accuracy of this approximation. The left panel compares the CDFs of $N(0,1)$ and ${\rm Logistic}(0,\sqrt{3}/\pi)$, while the right panel shows the pointwise difference $\Phi(x) - L(x)$. The maximum absolute difference occurs at $x =\pm 0.683$ and is only 0.02266, indicating that the logistic approximation is reasonably accurate.
\begin{figure}[htp!]
	\begin{center}
		\begin{tabular}{cc}
			\psfig{figure=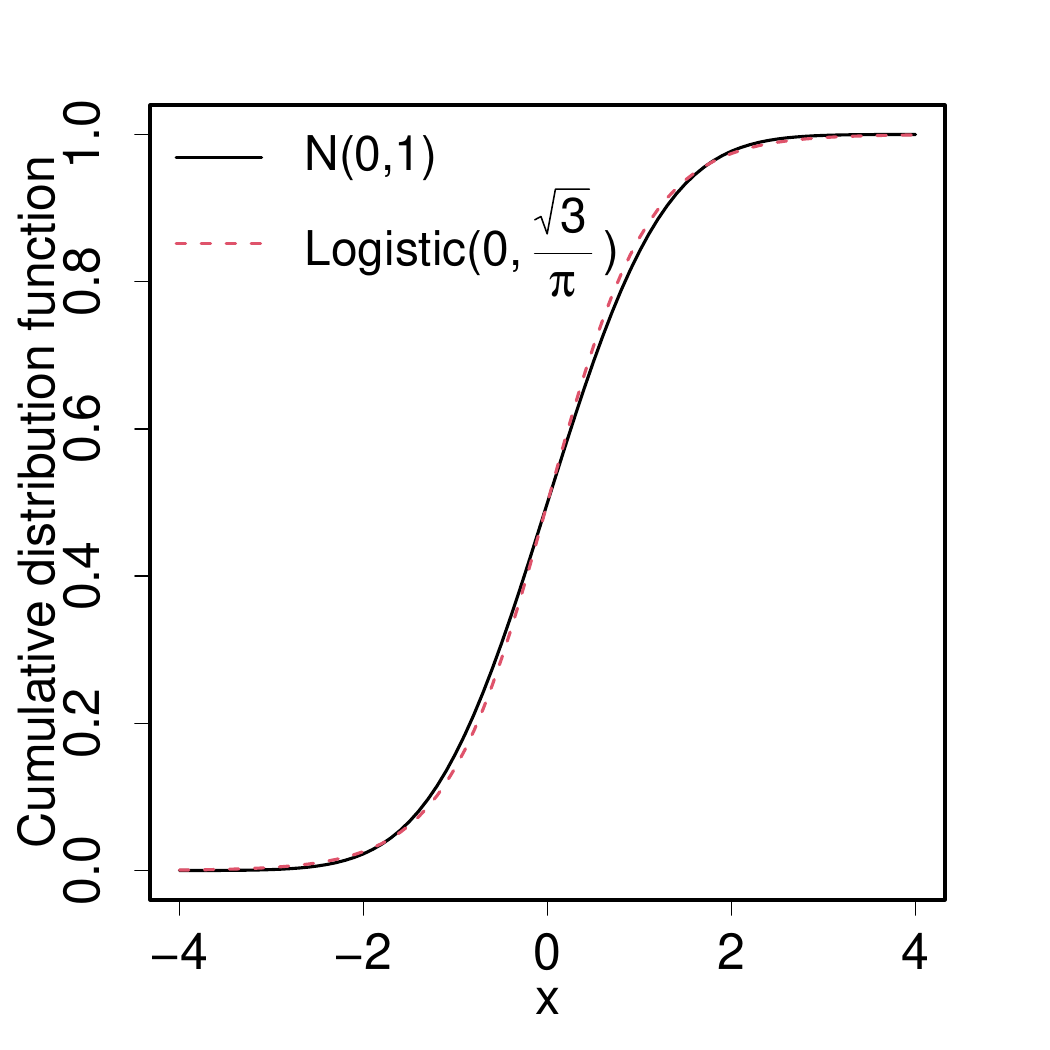,width=3.0in,angle=0}&
			\psfig{figure=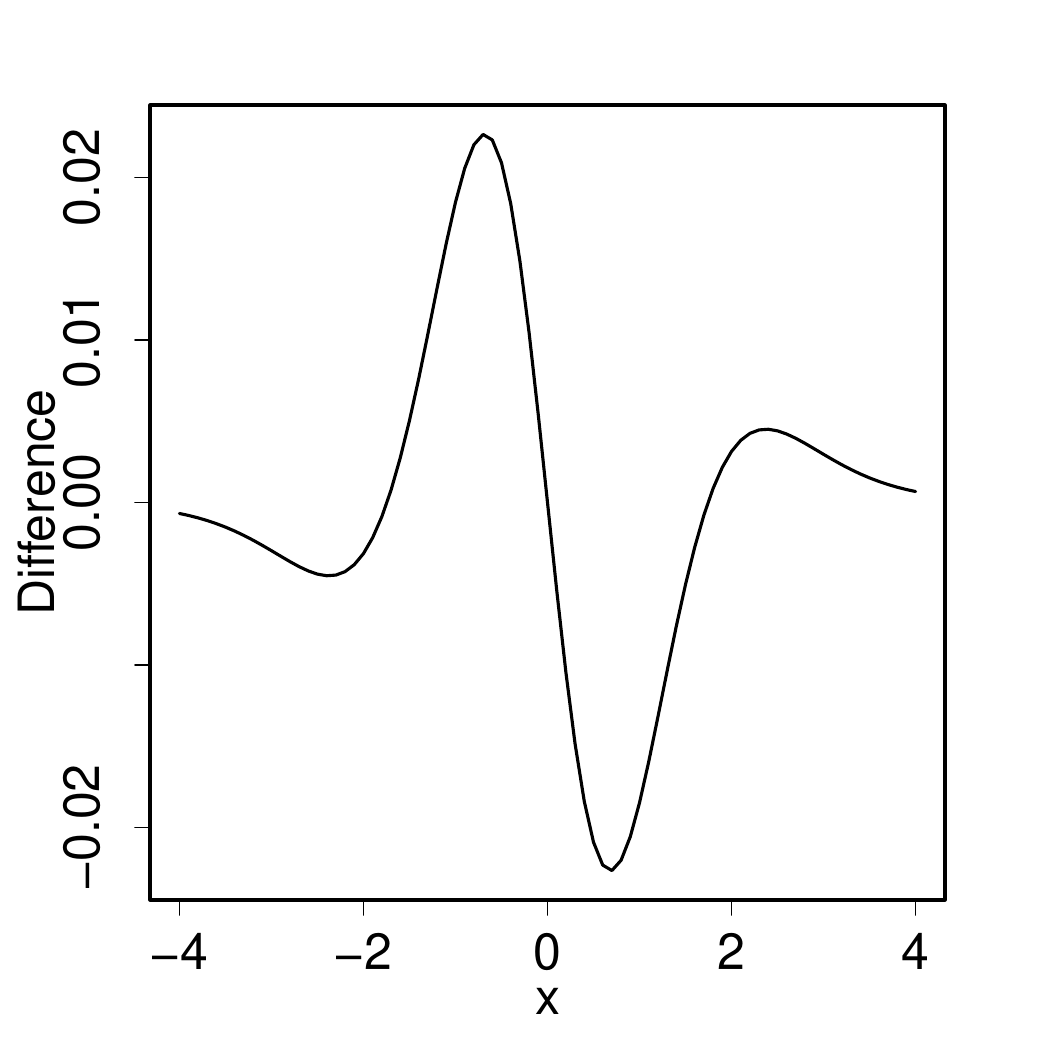,width=3.0in,angle=0}
		\end{tabular}{\caption{Approximation performance of the cumulative distribution function of $N(0,1)$ by ${\rm Logistic}(0,\sqrt{3}/\pi)$. Left panel: the CDFs of $N(0,1)$ and ${\rm Logistic}(0,\sqrt{3}/\pi)$; Right panel: the difference $\Phi(x) -L(x)$.}\label{figa1}}
	\end{center}
\end{figure}

We can further investigate the implication of this approximation at binary outcome level. For a binary outcome, the study-level effect size $\mu_i$ can be computed by the difference between the log odds of the treatment and control groups, $\ln\{p_i^T/(1-p_i^T)\}- \ln\{p_i^C/(1-p_i^C)\}$. According to the approximation in (\ref{approxe}), for any event probability $p$, the log odds $\ln\{p/(1-p)\}$ is approximated by $\pi/\sqrt{3}\cdot \Phi^{-1}(p)$. Web Figure \ref{figa2} illustrates the accuracy of this approximation across a range of event probabilities. The left panel compares the true log odds with the approximated log odds, and the right panel shows the pointwise difference. For event probabilities within the range 0.05 to 0.95 (corresponding to log odds between -2.944 and 2.944), which are not in the extreme tails, the maximum absolute difference occurs at $p = 0.158$ and $p = 0.842$, reaching 0.1455, indicating that the approximation is reasonably accurate for a wide range of practical purposes.
\begin{figure}[htp!]
	\begin{center}
		\begin{tabular}{cc}
			\psfig{figure=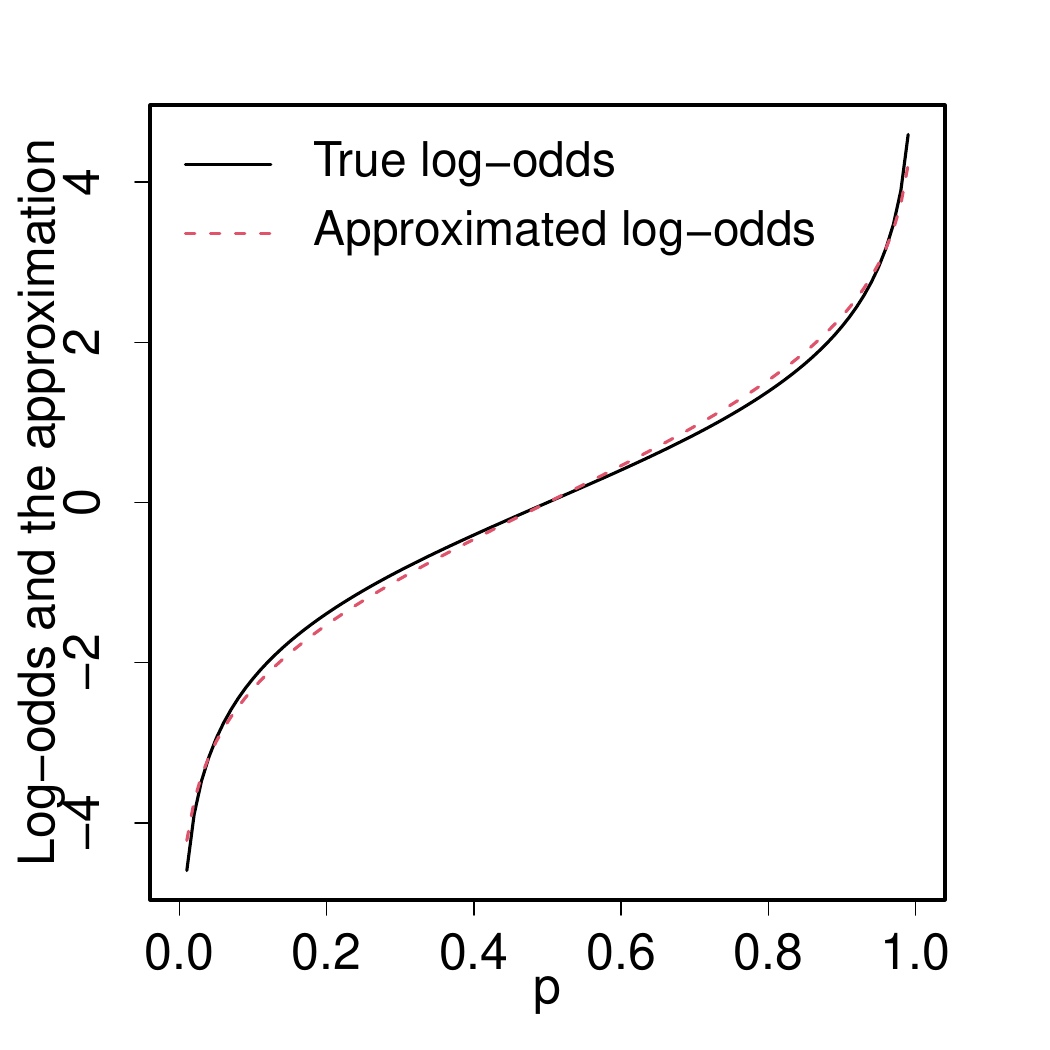,width=3.0in,angle=0}&
			\psfig{figure=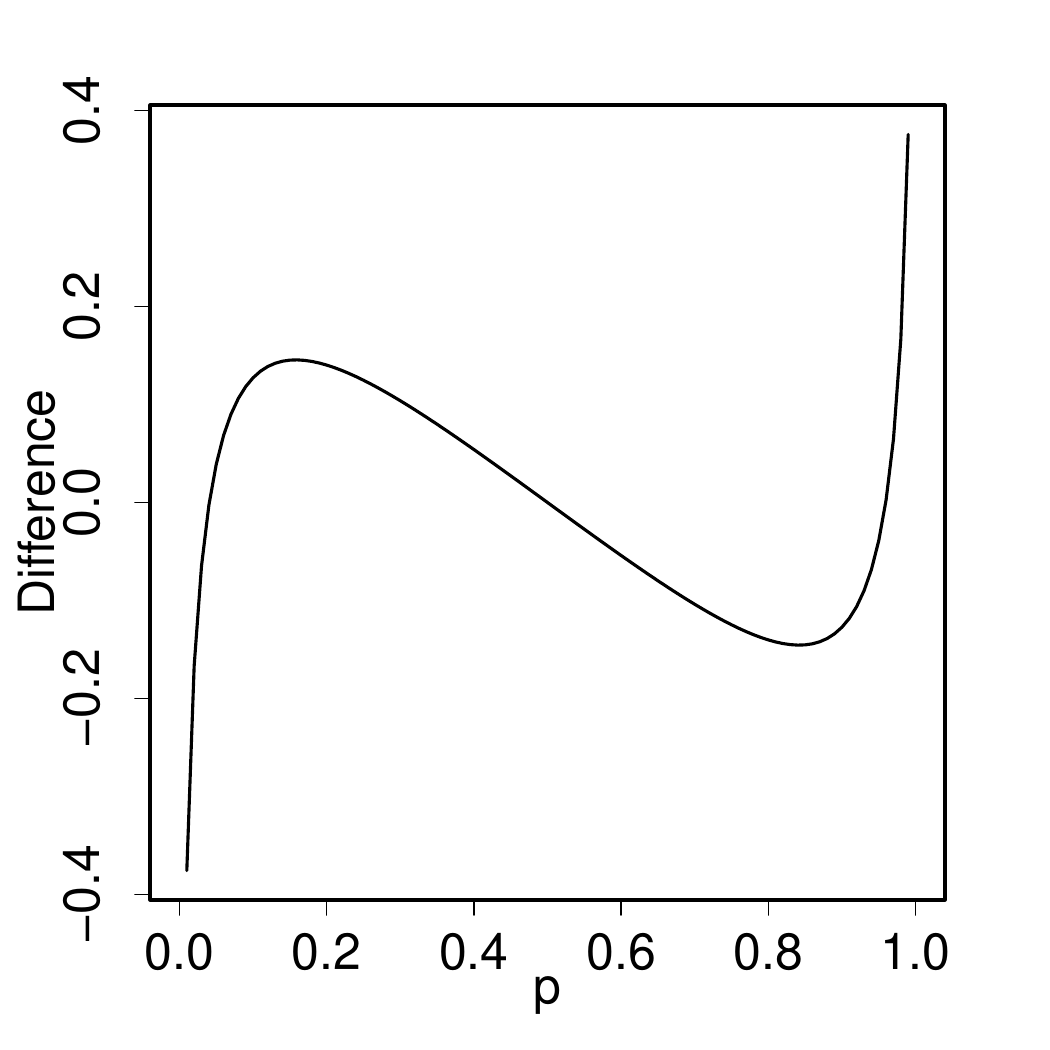,width=3.0in,angle=0}
		\end{tabular}{\caption{Approximation performance of the log odds at the binary outcome level. Left panel: true log odds $\ln\{p/(1-p)\}$ and the approximated log odds $\pi/\sqrt{3}\cdot \Phi^{-1}(p)$; Right panel: the difference $\ln\{p/(1-p)\} - \pi/\sqrt{3}\cdot \Phi^{-1}(p)$.}\label{figa2}}
	\end{center}
\end{figure}
\end{appendices}

\end{document}